\newtheorem{remark}{Remark}
\newtheorem{thm}{Theorem}
\newtheorem{lemma}{Lemma}
\newtheorem{expm}{Example}
\DeclareMathOperator\Spec{Spec}
\def\BibTeX{{\rm B\kern-.05em{\sc i\kern-.025em b}\kern-.08em
    T\kern-.1667em\lower.7ex\hbox{E}\kern-.125emX}}
\begin{document}
\title{\LARGE \bf
Absolute centrality in a signed Friedkin-Johnsen based model: a graphical characterisation of influence}
\author{Aashi Shrinate$^{1}$, \IEEEmembership{Student member, IEEE} and Twinkle Tripathy$^2$, \IEEEmembership{Senior Member, IEEE}
\thanks{$^{1}$Aashi Shrinate is a research scholar and $^{2}$ Twinkle Tripathy is an Assistant Professor in the Control and Automation specialization of the Department of Electrical Engineering, Indian Institute of Technology Kanpur, Kanpur, Uttar Pradesh, India, 208016. Email: {\tt\small aashis21@iitk.ac.in and ttripathy@iitk.ac.in}.}}
\maketitle
\begin{abstract}
This paper studies the evolution of opinions {governed by a} Friedkin-Johnsen (FJ) { based model in arbitrary network structures with signed interactions}. The agents contributing to the opinion formation are characterised as being \textit{influential}. Initially, { the agents are classified as opinion leaders and followers based on network connectivity and nature of interactions.} However, the addition of stubbornness leads to interesting behaviours wherein a \textit{non-influential} agent can now become \textit{influential} and \textit{vice-versa}.  Thereafter, a signal flow graph (SFG) based method is proposed to quantify the influence of an influential agent's opinions. { Additionally, it helps illustrate the role played by network topology in shaping the final opinions of the agents.} Based on this analysis, the \textit{absolute centrality measure} is proposed to determine the overall influence of all the agents in the network. Unlike most of the existing measures, it is applicable to any network structure and considers the effect of stubbornness and antagonism. Examples are presented throughout the paper to illustrate and validate these results.
\end{abstract}

\begin{IEEEkeywords}
Opinion dynamics; Signed networks; Friedkin-Johnsen model; Centrality measures.
\end{IEEEkeywords}
\section{INTRODUCTION}
\IEEEPARstart{C}{onsider} a group of connected individuals communicating with each other. The opinion of an individual undergoes a natural transformation through the interactions within the group on topics of interest; this transformation depends on the importance it assigns to the opinions of those with whom it engages. 
The analysis of the resulting behaviours is a complex problem but an important one for our society, especially in this day and age when social networks are being employed to influence consumer behaviours \cite{voramontri2019impact}, voting preferences \cite{fernandez2014voter} and shaping public opinions via disinformation campaigns \cite{Gorodnichenko2021} among others.

Several models have been proposed to study opinion formation in a network of interacting agents, e.g. averaging based the DeGroot's model \cite{degroot1974reaching}, Friedkin-Johnsen (FJ) model \cite{FRIEDKIN1997209}, homophily-based Hegselmann-Krause \cite{rainer2002opinion}, Bayesian models \cite{10.2307/2118364}, \textit{etc}. However, despite their simplicity, experimental evidence has shown that the DeGroot-based models can sometimes predict opinion formation more accurately than the Bayesian models \cite{chandrasekhar2020testing}. { The FJ model, an extension of DeGroot's model, has become popular in the literature due to its simplicity and ability to explain diverse behaviours. Additionally, it is one of the few models that accurately predict individuals' opinions in human-subject experiments \cite{noah1999social}. }

The FJ model accounts for disagreement, the most commonly observed behaviour in a society, by introducing agent(s) in the network who are stubborn in their prejudices. 
Another peculiar aspect of such an opinion formation process is that the opinion value at which the convergence occurs often depends only on the initial states of certain agents in the network. We call an agent \textit{influential} if its initial opinion contributes to the final opinion of any agent in the network. 
This property is closely linked to the centrality measures, which generate a ranking of agents in a network based on their importance. Several popular measures exist, such as in-degree centrality, eigenvector centrality \cite{Bonacich}, Katz centrality \cite{katz1953new}, closeness centrality \cite{Bavelas1950} and betweenness centrality \cite{Freeman1977} that denote the relative importance of nodes in the network based on different graph properties. {  However, the authors in \cite{friedkin1991theoretical} note that a centrality measure applies to a social process only when it is derived from the latter.

The authors in \cite{friedkin1991theoretical} and \cite{Community_Cleavage} propose \textit{influence centrality} (IC) to quantify the contribution of each agent in the final opinion of the agents in the network when the opinions are governed the FJ model.
In \cite{proskurnikov2016pagerank}, the authors prove that IC becomes identical to the powerful PageRank centrality when all agents in the network are equally stubborn.} The evolution of social power (IC) in the FJ model over a sequence of issues is examined in \cite{Tian2022} under the assumption that each agent in the network displays a positive degree of stubbornness.
{  Note that the matrix inversion required to determine IC (equivalently the final opinions) is very inefficient for large networks. More importantly, 
this method fails to give any insight into the role of network interconnections on IC.

The authors in \cite{gionis2013opinion,GHADERI20143209} employ random walks on a network with absorbing nodes to determine the final opinions of agents governed by the FJ model. Additionally, the authors in \cite{GHADERI20143209} show the equivalence between the network of agents and an electric network with each stubborn agent acting as an ideal voltage source. The final opinion of each agent is the voltage at the corresponding node in the electric network. These approaches show the dependence of the final opinions of agents on both network interconnections and stubborn behaviour.  } 

The works in \cite{friedkin1991theoretical,Community_Cleavage,proskurnikov2016pagerank,Tian2022,GHADERI20143209,gionis2013opinion} determine the final opinions (equivalently, the IC) considering the network to have cooperative interactions which is admissible in various applications. Social networks, however, generally have competitive interactions along with cooperative. They are represented by signed networks.
The signed interactions in DeGroot's framework and Laplacian framework have been studied in \cite{xia2015structural} and \cite{altafini}, respectively. {  The signed interactions in the 
FJ model has only recently been introduced in \cite{altafini_an_FJ} over a single issue and a sequence of issues while considering each agent to have a positive degree of stubbornness. In \cite{zhou2024friedkin}, the authors determine the equilibrium solution of the FJ model with signed interactions by traversing the random walks on the augmented graph derived from the underlying network which is connected and undirected.}

{ In the aforementioned works relating to FJ model, each agent in the network is assumed to either be stubborn or have a path to a stubborn agent. As a result, only stubborn agents influence others and have a non-zero entry in the IC vector. However due to the heterogeneity among agents,} there may exist agents in the network which do not have a path to any stubborn agent. Such agents are referred to as \textit{oblivious agents} in the literature \cite{parsegov2016novel}. The convergence of opinions governed by the FJ model in the presence of oblivious agents is examined in { \cite{parsegov2016novel} and \cite{TIAN2018213}. Furthermore in \cite{TIAN2018213}, the necessary conditions for non-stubborn agents to be influential are presented. Note that in both of these works, the agent interactions are cooperative.}

In contrast to the works \cite{friedkin1991theoretical,proskurnikov2016pagerank,GHADERI20143209,gionis2013opinion,parsegov2016novel,TIAN2018213,altafini_an_FJ,zhou2024friedkin},
 this paper studies the opinion formation in signed networks with arbitrary connectivity using an FJ-based model { in the presence of oblivious agents.} In the given framework, we characterise the agents as opinion leaders and followers based on the network topology. {  Next, we determine the final opinion of the agents in the network and identify the influential agents in this process.} We propose an SFG-based methodology to precisely evaluate the values to which the opinions of the `influenced' agents converge and quantify the effect of the `influential' agents. { If the degree of stubbornness in the network changes}, we show that stubborn followers always become influential. More interestingly, we present the conditions under which some opinion leaders can now cease to remain \textit{influential}. In order to identify the most influential node(s) in the network, we also propose a new centrality measure derived from the opinion model, referred to as the \textit{absolute centrality measure}; its major advantages are that it is defined for any arbitrary network structure and {  it accounts for the effects of signed interactions, stubbornness and the opinion leaders.} 
 With this, we now highlight the major contributions of the work.

\begin{itemize}
    \item \textit{A generalised framework:} The opinion evolution model proposed in this work generalises the FJ and DeGroot models in \cite{parsegov2016novel} to a network with cooperative and antagonistic interactions. It is also a generalisation of the existing works on the quantification of influence as it is applicable to any network structure. Furthermore, we allow any agent in the network to be stubborn or non-stubborn, unlike the works in \cite{Tian2022,altafini_an_FJ}.
    
    { \item \textit{Convergence Analysis}: We analyse the convergence of opinions evolving by the FJ-based model, considering both the presence of signed interactions and the oblivious agents, unlike any other existing work. Based on the convergence analysis, we further determine the influential agents and derive the necessary and sufficient conditions an agent must satisfy to be influential.}
    \item \textit{A graphical representation of influence:} Our work presents a graphical approach to illustrate influence propagation in the network by using SFGs. To the best of our knowledge, this is one of the first works which uses SFGs for influence evaluation. Through this, we establish a direct correlation of influence with the network topology, the nature of interactions and stubborn behaviours.
    \item \textit{Absolute centrality measure:} Despite the popularity of FJ models, none of the existing influence centrality measures account for antagonism and stubborn behaviours simultaneously. 
    The \textit{absolute centrality measure} proposed in this work bridges this gap in the literature { since it is derived from the opinion evolution by FJ based model in signed networks. Additionally, this measure also accounts for the influence arising out of network properties.}  
\end{itemize}
The paper has been organised as follows: Sec. \ref{Sec2} discusses the required notations and the relevant preliminaries. 
The proposed opinion model and classification of the agents are presented in Sec. \ref{Sec3}. Sec. \ref{Sec4} analyses the convergence of opinions evolving by the proposed model. Sec. \ref{Sec5}
discusses the quantification of the influence of influential agents in any weakly connected networks by the SFGs with illustrative examples. The absolute influence centrality is presented in Sec. \ref{Sec6}. We conclude the paper in Sec. \ref{sec:con} with some insights into the possible future research directions.
\section{Notations \& Preliminaries}
\label{Sec2}
\subsection{Notations}
\label{subsec:NOT}
The vector $\mathbb{1}_n \in \mathbb{R}^n$ denotes a column vector with all entries equal to $+1$. The identity matrix of dimension $n$ is denoted by $I_n$. We use $\mathbb{0}$ to denote a matrix with all entries equal to $0$ of appropriate dimensions. For a given matrix $M =[m_{ij}] \in \mathbb{R}^{n \times n}$, let $\tilde{M}=[|m_{ij}|]$, where $|m_{ij}|$ is the absolute value of $(i,j)^{th}$ entry of $M$ for all $i,j=\{1,2,...,n\}$. A diagonal matrix $M \in \mathbb{R}^{n \times n}$ is denoted by $M=diag([m_1,m_2,...,m_n])$. 
The spectrum of $M$ is the set of all eigenvalues of $M$, denoted by $\Spec(M)$. The spectral radius is the maximum norm of the eigenvalues of matrix $M$, denoted by $\rho(M)=\max\{ |\lambda| : \lambda \in \Spec(M)\} $.
\subsection{Graph Preliminaries}
\label{subsec:GP}
A graph is defined as $\mathcal{G}=\{\mathcal{V},\mathcal{E},A\}$ where $\mathcal{V}=\{1,2,...,$ $n\}$ is the set of nodes representing the $n$ agents in the network, $\mathcal{E} \subseteq \mathcal{V} \times \mathcal{V}$ is the set of ordered pair of nodes called edges which denote the communication topology of the network and $A=[a_{ij}] \in \mathbb{R}^{n \times n}$ is the signed weighted adjacency matrix with $a_{ij} \neq 0$ if and only if $(i,j) \in \mathcal{E}$. The ordered pair $(i,j) \in \mathcal{E}$ implies that node $i$ has a directed edge to node $j$. Additionally, it informs that node $i$ is an in-neighbour of node $j$ and node $j$ is an out-neighbour of node $i$. The entry $a_{ij}$ denotes the edge weight of edge $(i,j) \in \mathcal{E}$. 

A graph is undirected if $(i,j) \in \mathcal{E}$ implies that $(j,i) \in \mathcal{E}$. On the other hand, the ordered pair $(i,j) \in \mathcal{E}$ in a directed graph (digraph) does not imply that $(j,i) \in \mathcal{E}$. The neighbourhood of an agent $i$, composed of its out-neighbours, is defined as $N_i=\{j \in \mathcal{V}|(i,j) \in \mathcal{E}\}$.
A sink is a node in $\mathcal{G}$ without any out-neighbours and a source is a node without any in-neighbours.

A path is an ordered sequence of nodes in which every pair of adjacent nodes produces an edge that belongs to the set $\mathcal{E}$. A cycle is a path that starts and ends at the same node. A digraph is acyclic if it does not contain any cycles. An undirected graph is connected if there exists a path between every pair of nodes. A digraph is a strongly connected graph if a directed path exists between every pair of nodes in the graph. A graph is weakly connected if it is not strongly connected but its undirected version is connected. 

The condensation graph of a graph $\mathcal{G}$ is defined as $C(\mathcal{G})=(\mathcal{V}_c,\mathcal{E}_c)$. Each node $\mathcal{I} \in \mathcal{V}_c$ is a {   strongly connected component (SCC)} of graph $\mathcal{G}$, and an edge $(\mathcal{I},\mathcal{J})\in \mathcal{E}_c $ exists if and only if an edge $(i,j)\in \mathcal{E}$ exists in graph $\mathcal{G}$ from node $i \in \mathcal{I}$ to a node $j \in \mathcal{J}$.  A sink of the condensation graph is {an   SCC} that forms a node in the $C(\mathcal{G})$ without any outgoing edges.

A graph is structurally balanced if there exists a bipartition of vertices $\mathcal{V} \text{ such that } \mathcal{V}_1 \cap \mathcal{V}_2=\emptyset$ and $\mathcal{V}_1 \cup \mathcal{V}_2 =\mathcal{V}$ 
with positive interaction $a_{ij} \geq 0$ between nodes  $i,j \in \mathcal{V}_q \ (q\in\{1,2\})$ and negative interaction $a_{ij} \leq 0$ if  $i \in \mathcal{V}_{p}$ and $j \in \mathcal{V}_q,p\neq q, (p,q\in\{1,2\})$. A graph that is not structurally balanced is structurally unbalanced.
\subsection{Signal Flow Graph Preliminaries}
\label{subsec:SFGP}
{  An SFG $\mathcal{G}_s=({V}_s,B_s)$ is a digraph formed by nodes $\{1,...,k\}$ in the set ${V}_s$ and branches in the set ${B}_s$. Each node $i$ in $\mathcal{G}_s$ is associated with a node signal or state $y_i$. An ordered pair $(j,i)$ denotes a branch from node $j$ to node $i$ that has an associated branch gain $g_{i,j} \in \mathbb{R}$. A branch that originates from $j$ and terminates at $i$ represents the dependence of $y_i$ on $y_j$. The relation among the node states is described by a set of linear equations of the form,
\begin{align}
\label{SFG_description}
   y_i=\sum_{j} g_{i,j} y_j \qquad i,j \in \{1,2,...,k\}.
\end{align}
It follows from eqn. \eqref{SFG_description} that a state $y_i$ can be determined using the state $y_j$ of node $j$ from which it has an incoming branch. We now define some important terms associated with $\mathcal{G}_s$.

\begin{itemize}
    \item In the context of an SFG, a sink node is one which does not affect any of the other node states. On the other hand, a source node is one which does not get affected by any other node state.
    \item A \textit{forward path} is one where none of the nodes is traversed more than once. A \textit{feedback loop} (or a cycle) is a forward path where the first and the last nodes coincide. 
    \item A pair of loops in $\mathcal{G}_s$ are \textit{non-touching} if none of the nodes or branches in the loops are common. Similarly, a subgraph of $\mathcal{G}_s$ does not touch a forward path if it contains all the nodes and branches of $\mathcal{G}_s$ except those occurring in the forward path.
  \item The \textit{path gain} of a path and the \textit{loop gain} of a loop in $\mathcal{G}_s$ is the product of the branch gains of the branches forming the path and the loop, respectively.
\end{itemize}

The gain of an SFG $\mathcal{G}_s$ is the signal obtained at a sink for a unit signal at a source. Given any pair of sink and source nodes, the gain is determined using the Mason's gain formula \cite{mason1956feedback} given as follows,
\begin{equation}
\label{eq:sfg-gain}
    G=\frac{\sum_{h}G_h\Delta_h}{\Delta}
\end{equation}
where $G_h$ is the gain of the $h^{th}$ forward path from the given source to the sink and,
\begin{align*}
    \Delta=1-\sum_{m}P_{m1}+\sum_{m}P_{m2}-\sum_{m}P_{m3}+...
\end{align*}
with $\mathcal{P}_{mr}$ being the product of loop gains of $m^{th}$ combination of $r$ number of non touching loops. $\Delta_h$ is $\Delta$ of the subgraph of $\mathcal{G}_s$ not touching the $h^{th}$ forward path. }
\subsection{Matrix Preliminaries}
\label{subsec:MP}
A matrix $M\in \mathbb{R}^{n \times n}$ is row stochastic if $M$ is non-negative and $M\mathbb{1}_n=\mathbb{1}_n$. It is row substochastic if it is non-negative and all the row sums are at most one and at least one of the row sums is less than one. {  $M \in \mathbb{R}^{n \times n}$ is semi-convergent if the $\lim_{k \to \infty}M^k$ exists. It is convergent if $\lim_{k \to \infty}M^k=\mathbb{0}$. The spectrum of a semi-convergent and not convergent matrix must have the following properties:
\begin{itemize}
    \item $1$ is a simple or semi-simple eigenvalue,
    \item all the other eigenvalues have magnitude less than $1$.
\end{itemize}}The following results will be used in the paper.
\begin{lemma}[\cite{gelfand1941normierte}]
\label{lm:1}
For any $M \in \mathbb{C}^{n \times n}$ and any induced norm $\|.\|$, the Gelfand's formula states { that the spectral radius} $\rho(M)$ of matrix $M$ is  $\rho(M)=\lim_{k \to \infty}\|M^k\|^{\frac{1}{k}}$.
\end{lemma}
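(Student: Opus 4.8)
The plan is to prove the identity by squeezing $\lim_{k\to\infty}\|M^k\|^{1/k}$ between $\rho(M)$ from below and $\rho(M)$ from above, so that the limit both exists and equals $\rho(M)$.

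First I would establish the lower bound $\rho(M)\le\|M^k\|^{1/k}$ for every $k\ge 1$. This rests on two elementary facts: (i) for any induced norm, $\rho(N)\le\|N\|$ for all $N\in\mathbb{C}^{n\times n}$, since an eigenvector $v$ with $Nv=\lambda v$ and $|\lambda|=\rho(N)$ gives $|\lambda|\,\|v\|=\|Nv\|\le\|N\|\,\|v\|$; and (ii) the spectral mapping property $\Spec(M^k)=\{\lambda^k:\lambda\in\Spec(M)\}$, which yields $\rho(M^k)=\rho(M)^k$. Combining these, $\rho(M)^k=\rho(M^k)\le\|M^k\|$, and taking $k$-th roots gives $\rho(M)\le\|M^k\|^{1/k}$; in particular $\rho(M)\le\liminf_{k\to\infty}\|M^k\|^{1/k}$.

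Next I would handle the matching upper bound. Fix $\varepsilon>0$ and rescale, setting $M_\varepsilon=(\rho(M)+\varepsilon)^{-1}M$ so that $\rho(M_\varepsilon)<1$. The crucial step is to invoke the fact that a matrix whose spectral radius is strictly less than one is convergent, i.e. $M_\varepsilon^k\to\mathbb{0}$; hence the sequence $\{\|M_\varepsilon^k\|\}_{k\ge 1}$ is bounded by some constant $C\ge 1$, which gives $\|M^k\|\le C\,(\rho(M)+\varepsilon)^k$ and therefore $\|M^k\|^{1/k}\le C^{1/k}(\rho(M)+\varepsilon)$. Letting $k\to\infty$ yields $\limsup_{k\to\infty}\|M^k\|^{1/k}\le\rho(M)+\varepsilon$, and since $\varepsilon>0$ is arbitrary, $\limsup_{k\to\infty}\|M^k\|^{1/k}\le\rho(M)$. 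Together with the lower bound, the limit exists and equals $\rho(M)$, independently of which induced norm was chosen (all norms on $\mathbb{C}^{n\times n}$ are equivalent).

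I expect the main obstacle to be justifying the convergence-of-powers fact $\rho(N)<1\Rightarrow N^k\to\mathbb{0}$ used in the upper bound. The cleanest route is via the Jordan canonical form $N=PJP^{-1}$: each Jordan block associated with an eigenvalue $\lambda$ with $|\lambda|<1$ has entries that are binomial-coefficient multiples of $\lambda^{k-j}$, all of which vanish as $k\to\infty$ because polynomial growth is dominated by geometric decay, so $J^k\to\mathbb{0}$ and hence $N^k=PJ^kP^{-1}\to\mathbb{0}$. An alternative that avoids the Jordan form is to construct a vector norm adapted to $N$ in which the induced matrix norm satisfies $\|N\|<1$. Either way, this is the only genuinely non-trivial ingredient; everything else is routine.
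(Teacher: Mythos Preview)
Your proof is correct and follows the standard textbook argument for Gelfand's formula. However, the paper does not actually prove this lemma: it is stated as a classical result and attributed to \cite{gelfand1941normierte}, with no proof given. The paper only \emph{uses} Lemma~\ref{lm:1} (in the proof of Lemma~\ref{lm:2}) rather than establishing it. So there is nothing to compare against; your argument simply supplies a proof where the paper chose to cite one.
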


\begin{lemma}
\label{lm:2}
The spectral radius of a matrix $M \in \mathbb{R}^{n \times n}$ and $\Tilde{M}=[|m_{ij}|]$, derived from $M$, have the following relation.
\begin{align*}
    \rho(M)\leq \rho(\Tilde{M})
\end{align*}
\end{lemma}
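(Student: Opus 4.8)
The plan is to use Gelfand's formula (Lemma \ref{lm:1}) together with a sub-multiplicative matrix norm that is monotone with respect to entrywise absolute values, the natural candidate being the induced $\infty$-norm (max absolute row sum) or equivalently any norm of the form $\|M\| = \|\,\tilde M\,\|$ applied to the vector of absolute entries. First I would record the elementary entrywise inequality: for any two matrices $P, Q \in \mathbb{R}^{n\times n}$, the triangle inequality applied to each entry of the product gives $|(PQ)_{ij}| = \bigl|\sum_k p_{ik} q_{kj}\bigr| \leq \sum_k |p_{ik}|\,|q_{kj}| = (\tilde P\, \tilde Q)_{ij}$, i.e.\ $\widetilde{PQ} \leq \tilde P \tilde Q$ entrywise, where $\leq$ denotes the entrywise partial order on nonnegative matrices. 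Specialising $P = Q = \cdots = M$ and iterating, I obtain $\widetilde{M^k} \leq (\tilde M)^k$ entrywise for every $k \geq 1$.

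Next I would invoke monotonicity of a suitable norm. For the induced $\infty$-norm, $\|N\|_\infty = \max_i \sum_j |n_{ij}|$ depends only on $\tilde N$ and is monotone: if $0 \leq N_1 \leq N_2$ entrywise then $\|N_1\|_\infty \leq \|N_2\|_\infty$. Applying this with $N_1 = \widetilde{M^k}$ and $N_2 = (\tilde M)^k$ (both nonnegative), and noting that $\|M^k\|_\infty = \|\,\widetilde{M^k}\,\|_\infty$ since the $\infty$-norm only sees absolute values of entries, I get
\begin{align*}
    \|M^k\|_\infty = \|\widetilde{M^k}\|_\infty \leq \|(\tilde M)^k\|_\infty.
\end{align*}
Taking $k$-th roots and letting $k \to \infty$, Gelfand's formula (Lemma \ref{lm:1}) applied on both sides yields $\rho(M) = \lim_{k\to\infty} \|M^k\|_\infty^{1/k} \leq \lim_{k\to\infty} \|(\tilde M)^k\|_\infty^{1/k} = \rho(\tilde M)$, which is the claim.

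I do not anticipate a serious obstacle here; the only point requiring a little care is making sure the chosen norm genuinely satisfies all three properties used simultaneously — sub-multiplicativity (needed implicitly so that Gelfand's formula applies, though here we bypass that by working with $\widetilde{M^k}$ directly), invariance under entrywise absolute value ($\|N\|=\|\tilde N\|$), and entrywise monotonicity on the nonnegative cone. The induced $\infty$-norm (or $1$-norm) has all of these, so the argument goes through cleanly; one could alternatively phrase the whole thing via the Perron–Frobenius theory of nonnegative matrices, but the Gelfand-formula route is shorter and uses only what the excerpt has already made available.
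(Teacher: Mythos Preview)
Your proof is correct and follows essentially the same route as the paper: both compare $\|M^k\|_\infty$ with $\|\tilde M^k\|_\infty$ and then apply Gelfand's formula (Lemma~\ref{lm:1}). Your version is in fact more careful, since you explicitly justify the key inequality $\widetilde{M^k}\le(\tilde M)^k$ entrywise, whereas the paper simply asserts the row-sum comparison.
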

\begin{proof}
    Since row sum of $M^k$ is strictly less or equal to row sum of $\tilde{M}^k$ we get $\|M^k\|_{\infty} \leq \| \Tilde{M}^k\|_{\infty}$, which implies $\lim_{k \to \infty}\|M^k\|_{\infty}^{1/k} \leq \| \lim_{k \to \infty}\Tilde{M}^{k}\|_{\infty}^{1/k}$. { Using} the Gelfand's formula from Lemma \ref{lm:1}, we get $\rho(M) \leq \rho(\Tilde{M})$.
\end{proof}
\section{Opinion Dynamics}
\label{Sec3}
In a society, more often than not, there are individuals or groups of individuals who have high credibility and expertise in certain domains or wield significant power over the thoughts, beliefs, and actions of a broad audience. They can act as opinion leaders of the group who play a major role in opinion formation. {  Some examples of real-world opinion leaders include socio-political leaders, successful entrepreneurs and highly reputed scholars who have a significant impact on society}.
The advent of social media has led to the emergence of a new kind of opinion leader, referred to as \textit{influencer}, on whom individuals rely directly or indirectly for news and content. As a result, these influencers play a significant role in the formation of public opinion and affect consumer behaviour \cite{voramontri2019impact}, political ideologies \cite{ding2023electoral}, stock prices \cite{dougan2020speculator}, \textit{etc}.

It is also important to note that despite the widespread influence of opinion leaders, certain outliers may exist who resist changes in their perception and do not readily accept the influence of opinion leaders. We refer to them as \textit{stubborn} agents. There is a limited understanding of how opinion leaders influence the opinions of individuals with varying degrees of stubbornness. Our objective is to study and quantify the \textit{influence} that individuals in a network exert on each other when their opinions evolve in a heterogeneous group.

\subsection{FJ based opinion model}
\label{subsec:SFJM}
In this paper, we study the evolution of opinions in a network of agents where certain agents that are stubborn with respect to their prejudices.
{ The FJ model, an extension of DeGroot's model, considered the existence of such agents who take a convex combination of their neighbours' opinions and their own prejudices to update their opinions in a cooperative network.
However, individuals interacting in a group when discussing issues of significant importance such as politics \cite{NEAL2020103}, international relations \cite{harary1961structural,moore1978international}, sports \cite{fink2009off} \textit{etc.} may have competing interests which are denoted by antagonistic interactions.}

Consider a signed network $\mathcal{G}$ with agents indexed $1$ to $n$. The opinions of agents at the $k^{th}$ instance are given by the vector $\mathbf{x}(k)=[x_1(k),x_2(k),...,x_n(k)] \ \in \mathbb{R}^n$ where $x_i(k)$ is the opinion of the $i^{th}$ agent in the group. We assume that $\mathcal{G}$ has no self-loops.
The following discrete-time opinion dynamics model explains the opinion formation process in a group of $n$ heterogeneous agents, 
 \begin{equation}
    x_i(k+1)=\gamma_ix_i(k)+
    \beta_ix_i(0)+(1-\gamma_i-
    \beta_i)\sum_{j =1}^{n}q_{ij}x_j(k)
    \label{eq:opinion_dynamics}
\end{equation} 
where $\gamma_i \in [0,1]$ models the self-belief agent $i$ has in its opinion at $k^{th}$ instance, $\beta_i \in [0,1)$ denotes the degree of stubbornness of agent $i$ towards its initial opinion. 
The matrix $Q=[q_{ij}]$ is defined as:
\begin{align}
\label{eq:weighted_Adjacency}
q_{ij}=\begin{cases}
    \frac{a_{ij}}{\sum_{j=1}^{n}|a_{ij}|} & \text{ if }\sum_{j=1}^{n}|a_{ij}| \neq 0  \\
   1 & \text{ if }\sum_{j=1}^{n}|a_{ij}| = 0 \text{ and } i=j\\
   0 & \text{ if }\sum_{j=1}^{n}|a_{ij}| = 0 \text{ and } i\neq j
\end{cases}    
\end{align}
The opinion model parameters are chosen such that a node $i$, which is not a sink in the network $\mathcal{G}$, satisfies $\beta_i+\gamma_i < 1$ to ensure network structure is not altered by model parameters. { An agent with $\beta_i>0$ is a stubborn agent and the set of all the stubborn agents in $\mathcal{G}$ is denoted by $\mathcal{V}_{S}$.}

\begin{remark}
{ A fully stubborn agent ($\beta_i=1$) and a sink in the network $\mathcal{G}$ are equivalent in the sense that an agent's opinion remains unchanged in either of the cases. An agent which is not a sink of $\mathcal{G}$ has $\beta_i<1$, so $\beta_i \in [0,1)$ for all $i\in \mathcal{V}$.}
\end{remark}


Finally, the opinion dynamics model in vector form is given by,
\begin{equation}
\label{vector_op_model}
    \mathbf{x}(k+1)=\big(\Gamma+(I-\Gamma-\beta)Q\big)\mathbf{x}(k)+\beta \mathbf{x}(0)
\end{equation}
where $\mathbf{x}(0)$ is the initial opinion vector, the matrix $\beta=diag([\beta_1,\beta_2,...,\beta_n])$ is a diagonal matrix.
We are interested in the steady-state behaviours of the opinions arising from the proposed model. To ease this analysis, we propose a classification of the agents as discussed next. 
\subsection{Classification of agents}
\label{subsec:COA}
{ Social networks, in general, are weakly connected with strongly connected subgroups of individuals formed on the basis of shared interests, geographical locations, culture, \textit{etc.} Considering these strongly connected subgroups as nodes, the condensation graph $C(\mathcal{G})=(\mathcal{V}_c,\mathcal{E}_c)$ is derived from the network $\mathcal{G}$. For a weakly connected network $\mathcal{G}$, its $C(\mathcal{G})$ is a directed acyclic graph which comprises one or more sinks.} We define $\mathcal{S}$ as the set comprising of the sinks of $C(\mathcal{G})$.
   \begin{align}
   \mathcal{S} \coloneqq \{S_i|S_i \in \mathcal{V}_c, N_{S_i}=\emptyset\}
   \end{align}
where $S_i$ represents the $i^{th}$ sink of $C(\mathcal{G})$ 
and $N_{S_i}$ denotes the set of out-neighbours of node $S_i$ in $C(\mathcal{G})$. The number of sinks in graph $C(\mathcal{G})$ is denoted by $n_s=|\mathcal{S}|$. Note that the $i^{th}$ sink $S_i$ of $C(\mathcal{G})$ is also set of node(s) in graph $\mathcal{G}$.
\begin{expm}
   Let us consider a network of agents represented by $\mathcal{G}=(\mathcal{V},\mathcal{E},A)$ in Fig. \ref{fig_1_1} and with its condensation graph $C(\mathcal{G})=(\mathcal{V}_c,\mathcal{E}_c)$ in Fig. \ref{fig_1_2}. The network $\mathcal{G}$ is weakly connected and has sinks $\mathcal{S}=\{S_1,S_2,..,S_5\}=\{ \{5,6,7\},$ $
\{8,9,10\},\{11\},\{12,13,14\},\{15,16,17\}\}$. 
\end{expm} 
\begin{figure}[h]
\centering
  \begin{subfigure}{0.3\textwidth}
  \centering
    \includegraphics[width=0.9\linewidth]{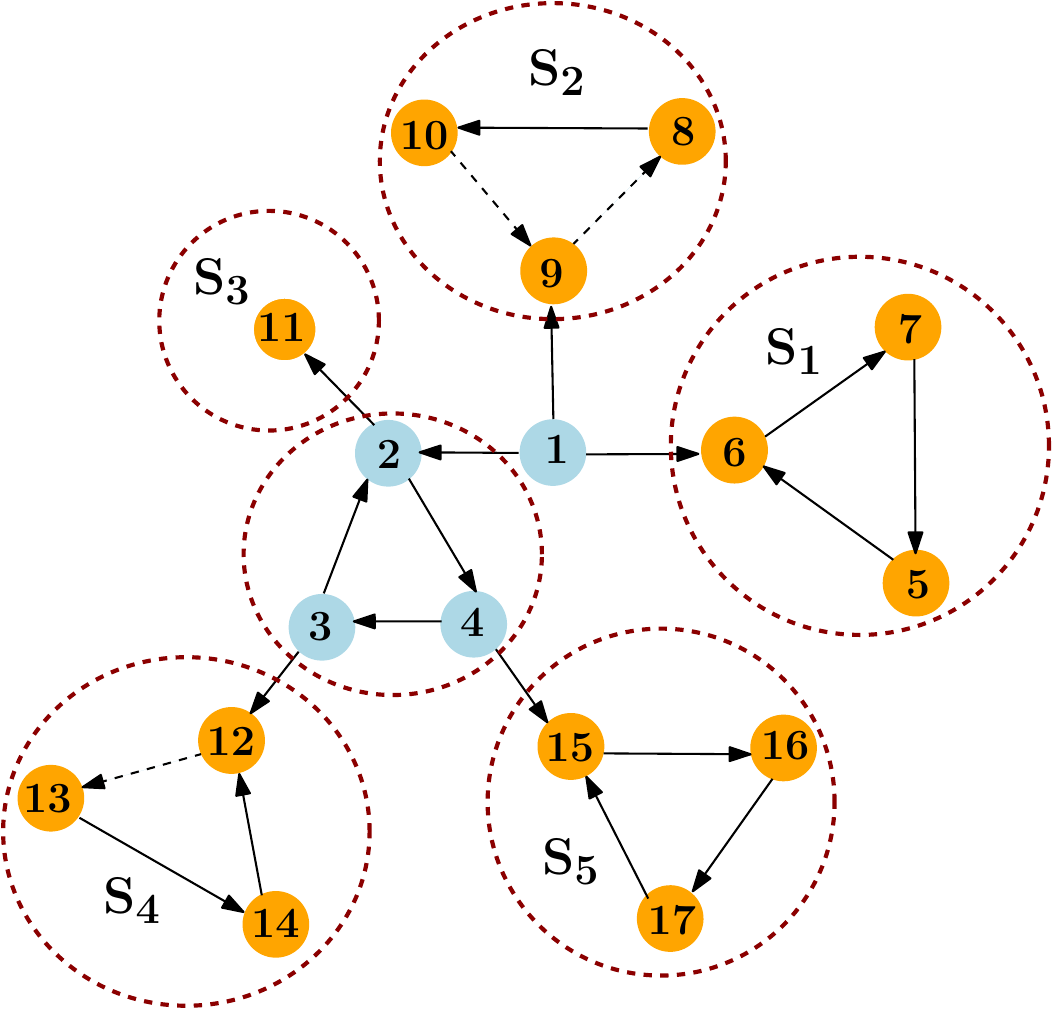}
    \caption{A weakly connected network $\mathcal{G}$ with circled \\ SCCs.}
    \label{fig_1_1}
  \end{subfigure}%
  \begin{subfigure}{0.2\textwidth}
  \centering
    \includegraphics[width=0.9\linewidth]{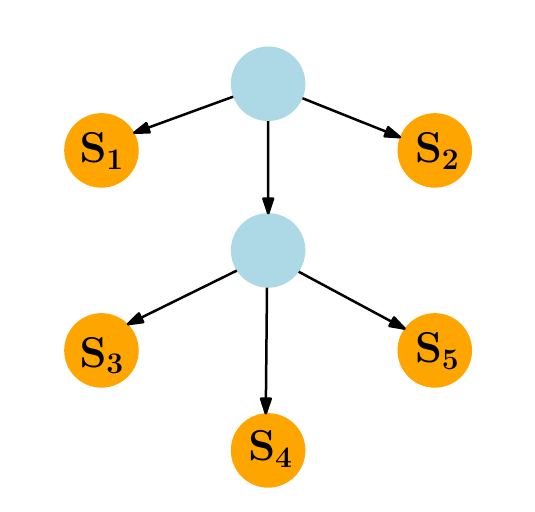}
    \caption{The condensation graph $C(\mathcal{G})$ of network $\mathcal{G}$}
     \label{fig_1_2}
  \end{subfigure}
  \caption{The followers and the opinion leaders of $\mathcal{G}$ are represented by blue and orange nodes, respectively. The dotted edges denote antagonistic interactions.}
\end{figure}
   
This subsection focuses on arriving at a mathematical classification of the agents in a network regarding their ability to influence the other agents. We begin with the identification of the opinion leaders in the network.
{  In a network $\mathcal{G}$ with opinions evolving according to eqn. \eqref{eq:opinion_dynamics}, an agent which belongs to a sink $S_i$ of $C(\mathcal{G})$ interacts only with the other agents in $S_i$. As a result, its opinion remains unaffected by the agents in $\mathcal{G}$ not belonging to $S_i$. On the other hand, a node in $\mathcal{G}$ not belonging to any sink of $C(\mathcal{G})$ must have an edge or a directed path to the node(s) belonging to a sink of $C(\mathcal{G})$. This occurs because $C(\mathcal{G})$ is a directed acyclic graph, Thus, the latter's opinion depends on the former during opinion evolution. This results in the following classification of agents:}
   \begin{itemize}
    \item An opinion leader is a node of network $\mathcal{G}$ belonging to a sink $S_i$ of  $C(\mathcal{G})$. The set of all the opinion leaders is defined as, 
    \begin{equation}
        \mathcal{V}_{o}\coloneqq \{j | \ j \in S_i \text{ and } i\in\{1,2,..,n_s\}\}.
    \end{equation}
    \item The set of followers $\mathcal{V}_{F}$ contains the nodes in the network $\mathcal{G}$ that do not belong to any sink of $C(\mathcal{G})$.  $$\mathcal{V}_{F} \coloneqq \mathcal{V}\setminus \mathcal{V}_{o}$$
\end{itemize}
 Each opinion leader in the network $\mathcal{G}$ is associated with a sink $S_i \in \mathcal{S}$ of $C(\mathcal{G})$. Thus, we define a function $\mathcal{S}^v(\cdot):\mathcal{V}\to \mathcal{S}$, which takes an opinion leader as input and returns the sink of $C(\mathcal{G})$ from set $\mathcal{S}$ to which the former belongs.
     \begin{align}
         \mathcal{S}^v(j)=\begin{cases}
             S_i & j \in \mathcal{V}_o \\
             \emptyset & j \in \mathcal{V}_F
         \end{cases}
     \end{align} 


Two types of opinion leaders in $\mathcal{G}$ can exist in the network.
     \begin{itemize}
    \item An opinion leader which is the only node in a sink of $C(\mathcal{G})$. The set of all such nodes of $\mathcal{G}$ is denoted as:
    \begin{equation}
        \mathcal{V}_{o1} \coloneqq \{j | \ j\in \mathcal{V}_o  , |\mathcal{S}^v(j)|=1\}
    \end{equation}
    where $|\mathcal{S}^v(j)|$ is the cardinality of sink $\mathcal{S}^v(j)$ equal to the number of opinion leaders belonging to the sink $\mathcal{S}^v(j)$. { Additionally, we know that every $i \in \mathcal{V}_{o1}$ is a sink in $\mathcal{G}$ and from eqn. \eqref{eq:opinion_dynamics} it follows that $i'$s opinion remains at its initial opinion irrespective of the magnitudes of $\beta_i$ and $\gamma_i$.    }
    \item An opinion leader that belongs to a sink of $C(\mathcal{G})$ consisting of two or more opinion leaders. The set of all such nodes of $\mathcal{G}$ is denoted as:
    \begin{equation}
        \mathcal{V}_{o2} \coloneqq \{j | \ j \in \mathcal{V}_o , |\mathcal{S}^v(j)|>1\}
    \end{equation}
    { Since opinion leaders in $\mathcal{V}_{o2}$ govern the process of opinion formation while still interacting with other agents, we consider them to be necessarily self-confident i.e. $\gamma_i>0$ for all $i \in \mathcal{V}_{o2}$. This has been established for DeGroot's model in \cite{PengJia_Reducible}. }
\end{itemize}
This classification allows us to analyse the effects of different kinds of agents on the opinion formation process. Note that an opinion leader in $\mathcal{V}_{o1}$ does not interact with the other agents in the network, whereas an opinion leader in $\mathcal{V}_{o2}$ belonging to a sink $S_i$ interacts only with other opinion leaders belonging to $S_i$. A follower in set $\mathcal{V}_{F}$ interacts with other followers or opinion leaders to update its opinion.  

In Fig. \ref{fig_1_1}, the opinion leaders in $\mathcal{G}$ are $\mathcal{V}_o$ are $\{5,6,...,17\}$ and the rest are followers. The opinion leader in $\mathcal{V}_{o1}=\{11\}$ is associated with sink $\mathcal{S}^v(11)=S_3$ of $C(\mathcal{G})$ and opinion leaders in $\mathcal{V}_{o2}=\{5,6,7,8,9,10,12,13,14,15,16,17\}$ are associated with sinks $\mathcal{S}^v(5)=\mathcal{S}^v(6)=\mathcal{S}^v(7)=S_1$, $\mathcal{S}^v(8)=\mathcal{S}^v(9)=\mathcal{S}^v(10)=S_2$, $\mathcal{S}^v(11)=S_3$, $\mathcal{S}^v(12)=\mathcal{S}^v(13)=\mathcal{S}^v(14)=S_4$ and $\mathcal{S}^v(15)=\mathcal{S}^v(16)=\mathcal{S}^v(17)=S_5$, respectively.

The sinks of $C(\mathcal{G})$ containing two or more opinion leaders from set $\mathcal{V}_{o2}$ can be further classified based on the nature of their interactions, as explained below.
\begin{itemize}
    \item $\mathcal{S}_{cp}$ is the set of sinks of $C(\mathcal{G})$ such that the associated opinion leaders in a sink $S_i \in \mathcal{S}_{cp} $ in $\mathcal{G}$ have cooperative interactions, equivalently, 
    \begin{align}
    \mathcal{S}_{cp}:=\{S_i| S_i\in \mathcal{S}, a_{jk}\geq 0 \ \forall \ j,k \in S_i \}.
    \label{eq:nc}
    \end{align}
    \item $\mathcal{S}_{bal}$ is the set of sinks of $C(\mathcal{G})$ such that the associated opinion leaders in a sink $S_i \in \mathcal{S}_{bal} $ have at least one antagonistic interaction and form a structurally balanced subgraph in $\mathcal{G}$. Then,
    \begin{align}
      \mathcal{S}_{bal}=\{S_i|S_i \in \mathcal{S},~&\exists \ j,k \in S_i \text{ such that } a_{jk}<0,\nonumber\\
      &S_i \text{ is structurally balanced}\}
      \label{eq:nbal}
    \end{align}
    \item  $\mathcal{S}_{unbal}$ is the set of sinks of $C(\mathcal{G})$ such that the associated opinion leaders in $\mathcal{G}$ form a structurally unbalanced subgraph. Then, 
    \begin{align}
    \mathcal{S}_{unbal}=\{S_i|S_i &\in \mathcal{S}, ~\exists \ j,k \in S_i \text{ such that } a_{jk}<0,\nonumber\\
    &S_i\text{ is structurally unbalanced} \}.
    \label{eq:nunbal}
    \end{align}
\end{itemize}
Structural balance property implies that the relations among agents in a group satisfy Heider's Laws \cite{heider1946attitudes}. 
 In general, even agents with cooperative interactions are structurally balanced \cite{altafini}. This implies that the sinks associated with opinion leaders in $\mathcal{V}_{o1}$ and the sinks in $\mathcal{S}_{cp}$ are trivially structurally balanced. We define the set of all sinks of $C(\mathcal{G})$, whose associated nodes in $\mathcal{G}$ form an effectively structurally balanced subgraph as,
\begin{align}
   S_b:=\cup_{j\in \mathcal{V}_{o1}}\mathcal{S}^v(j) \cup \mathcal{S}_{cp} \cup \mathcal{S}_{bal}.
   \label{eq:sb}
\end{align}
The opinion leaders in a sink belonging to $\mathcal{S}_{unbal}$ form a strongly connected and structurally unbalanced subgraph. The opinion leaders in the sink do not satisfy Heider's Laws resulting in cognitive dissonance \cite{festinger1957theory}.

%
Finally, given the presence of stubborn agents in the network, $\mathcal{V}_{S}$ comprises of the stubborn followers in $\mathcal{V}_{F}$ and stubborn opinion leaders in $\mathcal{V}_{o2}$. Since stubbornness may arise even in a sink of $C(\mathcal{G})$ composed of two or more opinion leaders, we define a set $\mathcal{S}_n$ to distinguish the sinks in $C(\mathcal{G})$ which consist of non-stubborn opinion leaders and form an effectively structurally balanced subgraph. $\mathcal{S}_n$ can be expressed as:
\begin{align}
   \mathcal{S}_n \coloneqq \{S_i\in S_b: \forall j \in S_i,\ j \in \mathcal{V}\setminus \mathcal{V}_{S}\} 
\end{align}
The classification of agents discussed in this section allows us to analyse the effect of different kinds of opinion leaders in the network.

\section{Convergence analysis}
    \label{Sec4}
In this section, we study the convergence of opinions of agents evolving according to eqn. \eqref{eq:opinion_dynamics} in a weakly connected signed network. The nodes of a weakly connected graph can be suitably permuted such that the adjacency matrix becomes block triangular. Therefore, we renumber the nodes such that $i=\{1,2,...,m\}$ are the followers and the rest are opinion leaders, where the opinion leaders associated with a sink of $C(\mathcal{G})$ are grouped together resulting in block triangular $A$. { Henceforth, we will use this numbering of nodes throughout the paper.} Further, we define the matrix $P$ as $P=\big(\Gamma+(I-\Gamma-\beta)Q\big)$. By definition, $P$ is block triangular of the following form,
\begin{align}
\label{B_rearranged}
P=\begin{bmatrix}
 P_{11} & P_{12} & ...   & P_{1(n_s+1)} \\
 \mathbb{0} &  P_{22} & \mathbb{0} & \mathbb{0}\\
  \vdots & ...  & \ddots & \vdots \\
 \mathbb{0} & \mathbb{0} & \mathbb{0} & P_{(n_s+1)(n_s+1)} \\
\end{bmatrix}
\end{align}
where $P_{ij}$ are submatrices for $i,j\in \{1,...,n_s+1\}$. The network is weakly connected; hence, each follower has a path to at least one of the opinion leaders. Thus, 
\begin{align}
\label{B:connected}
    P_{1j} \neq \mathbb{0}~\forall \ j \in \{2,...,n_s+1\}.
\end{align}
Since $P$ gets transformed to the block triangular form as shown in eqn. \eqref{B_rearranged}, it gets endowed with interesting spectral properties as given below.
\begin{thm}
\label{lm:6}
 $P$ is semi-convergent { (and not convergent)} if and only if $\mathcal{S}_n$ is non-empty, otherwise it is convergent.
\end{thm}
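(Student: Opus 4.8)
The plan is to exploit the block-triangular structure of $P$ in eqn. \eqref{B_rearranged} and reduce the semi-convergence question to the behaviour of the individual diagonal blocks $P_{22},\dots,P_{(n_s+1)(n_s+1)}$ (each associated with one sink of $C(\mathcal{G})$) together with the ``followers'' block $P_{11}$. Recall that for a block upper-triangular matrix, $\Spec(P)=\bigcup_{i}\Spec(P_{ii})$, and a power $P^k$ has diagonal blocks $P_{ii}^k$; hence $P$ is semi-convergent (and not convergent) iff every $P_{ii}$ is either convergent or semi-convergent-with-a-simple/semisimple eigenvalue $1$, \emph{at least one} $P_{ii}$ has $1$ as an eigenvalue, and the off-diagonal blocks of $P^k$ stay bounded and converge. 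So the argument splits into three parts: (i) analyse $P_{11}$ (the followers block) and show $\rho(P_{11})<1$, so it is convergent; (ii) analyse each sink block $P_{ii}$, $i\ge 2$, and show it is semi-convergent with eigenvalue $1$ (semisimple) precisely when the corresponding sink lies in $\mathcal{S}_n$, and convergent otherwise; (iii) assemble these facts to conclude the statement for the full matrix $P$, checking that the mixed (off-diagonal) blocks of $P^k$ behave well.

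For step (i): a follower $i\le m$ satisfies $\beta_i+\gamma_i<1$, and since the network is weakly connected every follower has a directed path to some opinion leader, i.e. to some index outside the follower block. Writing $\tilde P_{11}=[\,|(P_{11})_{ij}|\,]$, I would argue that $\tilde P_{11}$ is a nonnegative matrix whose row sums are all $\le 1$ and such that, following any path to the boundary, the row sums eventually become $<1$; more precisely $\tilde P$ restricted to followers is substochastic with the ``leak'' reachable from every follower, so $\rho(\tilde P_{11})<1$. By Lemma~\ref{lm:2}, $\rho(P_{11})\le\rho(\tilde P_{11})<1$, hence $P_{11}^k\to\mathbb{0}$. (This is also where the hypothesis that model parameters do not alter the network structure, $\beta_i+\gamma_i<1$ for non-sinks, is essential.)

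For step (ii): fix a sink $S_i$ and its block $B:=P_{ii}=\Gamma_i+(I-\Gamma_i-\beta_i)Q_i$, where $Q_i$ is the row-stochastic-in-absolute-value matrix of the strongly connected signed subgraph on $S_i$. I distinguish cases along the classification in Sec.~\ref{subsec:COA}. If $S_i$ corresponds to a singleton opinion leader in $\mathcal{V}_{o1}$, then $B=[1]$, trivially semi-convergent with eigenvalue $1$; such sinks are always in $\mathcal{S}_n$ (no stubbornness possible, trivially balanced), consistent with the claim. If $S_i\in\mathcal{S}_n$ with $|S_i|>1$: all agents are non-stubborn ($\beta_j=0$) so $B=\Gamma_i+(I-\Gamma_i)Q_i$; using the gauge/orthant transformation from structural balance (there is a signature matrix $D$ with $D Q_i D$ nonnegative and row-stochastic), $D B D=\Gamma_i+(I-\Gamma_i)(DQ_iD)$ is a genuine row-stochastic matrix that is irreducible (strong connectivity) and aperiodic (each $\gamma_j>0$ for $j\in\mathcal{V}_{o2}$ gives positive diagonal), hence semi-convergent with a simple eigenvalue $1$; therefore so is $B$. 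Conversely, if $S_i\notin\mathcal{S}_n$ with $|S_i|>1$, either some $\beta_j>0$ — then the corresponding row sum of $\tilde B$ is $\beta_j+\gamma_j+(1-\beta_j-\gamma_j)=1$ fails to hold, in fact the contribution through $Q_i$ carries weight $1-\beta_j-\gamma_j<1-\gamma_j$, making $\tilde B$ substochastic with a leak reachable (by strong connectivity) from every node, so $\rho(\tilde B)<1$ and $B$ is convergent; or the subgraph is structurally unbalanced, in which case the gauge transformation fails and instead $\rho(\tilde B)>\rho(B)$ strictly with $\rho(\tilde B)=1$, forcing $\rho(B)<1$ and convergence. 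The structurally-unbalanced subcase is the one I expect to be the main obstacle: one must show that for an irreducible signed row-stochastic-in-modulus $Q_i$ that is unbalanced, $\rho(Q_i)<1$ strictly (equivalently $1\notin\Spec(Q_i)$ and no eigenvalue of modulus $1$ exists after mixing with $\Gamma_i$) — this is the analogue of the Altafini-type result and is typically proved by noting that $\tilde Q_i$ is irreducible stochastic with Perron eigenvalue $1$ and Perron eigenvector $\mathbb{1}$, while unbalancedness precludes the existence of a $\pm1$ vector $v$ with $Q_i v=\pm v$, so strict inequality $\rho(Q_i)<\rho(\tilde Q_i)=1$ holds.

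For step (iii): combine the blocks. If $\mathcal{S}_n=\emptyset$, then every diagonal block of $P$ (followers block and all sink blocks) has spectral radius $<1$, so $\rho(P)<1$ and $P$ is convergent. If $\mathcal{S}_n\neq\emptyset$, then $1$ is an eigenvalue of $P$ (from any $\mathcal{S}_n$-block) and is semisimple, every other eigenvalue of every block has modulus $<1$, so $1$ is a semisimple eigenvalue of $P$ and all other eigenvalues have modulus $<1$; by the matrix preliminaries this is exactly the spectral characterisation of a semi-convergent, non-convergent matrix, and it remains to check the off-diagonal blocks of $P^k$ converge. For those, one writes the $(1,j)$ block of $P^k$ as a finite sum $\sum_{\ell=0}^{k-1} P_{11}^{\ell} P_{1j} P_{jj}^{k-1-\ell}$ (plus cross-terms, but $P$ is essentially two-level since the sink blocks do not feed each other), and uses $\rho(P_{11})<1$ together with boundedness of $P_{jj}^k$ to conclude the series $\sum_{\ell\ge0}P_{11}^\ell P_{1j}P_{jj}^{\,m-\ell}$ converges as $k\to\infty$; the geometric decay of $P_{11}^\ell$ dominates the bounded $P_{jj}^{k-1-\ell}$, giving a finite limit. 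Hence $P^k$ converges entrywise, completing the proof.
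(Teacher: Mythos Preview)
Your proposal is correct and follows essentially the same route as the paper: both exploit the block-triangular form \eqref{B_rearranged} to reduce the question to the spectra of the diagonal blocks, both dispatch $P_{11}$ via the substochastic/reachability argument combined with Lemma~\ref{lm:2}, and both case-split the sink blocks according to $\mathcal{V}_{o1}$, $\mathcal{S}_{cp}$, $\mathcal{S}_{bal}$, $\mathcal{S}_{unbal}$, and presence of stubbornness. The only notable difference is your step~(iii): the paper stops once it has established that $1$ is a (semi)simple eigenvalue and every other eigenvalue has modulus strictly below one, invoking the spectral characterisation of semi-convergence stated in Sec.~\ref{subsec:MP}, whereas you additionally verify convergence of the off-diagonal blocks of $P^k$ by hand via the series $\sum_\ell P_{11}^\ell P_{1j}P_{jj}^{k-1-\ell}$; this is a slightly more self-contained finish but is not needed once the spectral criterion is taken as given.
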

\begin{proof}
We know that $P$ is block triangular, hence, $\Spec(P)=\cup_{i=1}^{(n_s+1)}\Spec(P_{ii})$. So, we analyse the spectrum of each submatrix $P_{ii}$ for $i \in \{1,...,n_s+1\}$. Consider the submatrix $P_{11}$ associated with followers in the network. We define the matrix $\tilde{P}_{11}$ derived from $P_{11}$ as $\tilde{P}_{11}=[|p_{ij}|]$. In order to determine the spectral properties of $P_{11}$, we analyse the spectrum of the associated non-negative matrix $\tilde{P}_{11}$.

When at least one follower in the network is stubborn, the following scenarios may occur:
\begin{itemize}
    \item[(a)] If the follower $i$ is not stubborn and all of its neighbours are followers, then the row-sum for follower $i$ in $\tilde{P}_{11}$ is equal $\sum_{j\in \mathcal{V}_{F}}|p_{ij}|=1$.
    \item[(b)] If the follower $i$ is stubborn or some of its neighbours are opinion leaders, or both, then the row sum for follower $i$ in $\tilde{P}_{11}$ is $\sum_{j\in \mathcal{V}_{F}}|p_{ij}|<1$.
\end{itemize}
When none of the followers satisfy condition $(a)$, then it directly follows that $\rho(\Tilde{P}_{11})<1$ as $\rho(\Tilde{P}_{11})\leq \|\Tilde{P}_{11}\|_{\infty}<1$. If there exists one or more followers that satisfy condition $(a)$, then $\Tilde{P}_{11}$ is row substochastic. Since the network is weakly connected, each follower has a directed path to one or more opinion leaders. Thus, every follower with a row-sum equal to $1$ in $\tilde{P}_{11}$ has a path to a follower $j$ whose neighbour is an opinion leader. This implies that $\rho(\Tilde{P}_{ii})<1$ \cite{FB-LNS}. Consequently, from Lemma \ref{lm:2}, we infer that $\rho(P_{11})<1$. 
Similar arguments hold even if none of the followers in the network are stubborn resulting in $\rho(P_{11})<1$.

 Now, we discuss the spectral properties of the blocks $P_{ii}$ associated with the opinion leaders. An opinion leader $j$ in $\mathcal{V}_{o1}$ in sink $S_i$ of $C(\mathcal{G})$ is associated with submatrix $P_{(i+1)(i+1)}=[1]$. This implies that it contributes a simple eigenvalue $1$ to the spectrum of $P$. On the other hand, an opinion leader $j$ in $\mathcal{V}_{o2}$ belongs to a sink $S_i$ is associated with the square submatrix $P_{(i+1)(i+1)}$ of dimension strictly greater than one. If none of the opinion leaders in $\mathcal{S}_i$ are stubborn, then the following scenarios can arise.
        \begin{itemize}
            \item Suppose opinion leaders in $S_i$ have cooperative interactions such that $S_i\in \mathcal{S}_{cp}$. Owing to strong connectedness, { $P_{ii}$} is both row-stochastic and primitive. It implies that { $P_{ii}$} has a simple eigenvalue $1$ in its spectrum and all other eigenvalues have magnitudes less than one.
            \item Suppose the opinion leaders associated with the sink $S_i$ have antagonistic interactions and form a structurally balanced and strongly connected subgraph implying $S_i\in \mathcal{S}_{bal}$. Thus, the spectrum of { $P_{ii}$} has a simple eigenvalue $1$ and the other eigenvalues have magnitudes strictly less than one \cite{xia2015structural}.
            
            \item When $S_i\in \mathcal{S}_{unbal}$, the opinion leaders corresponding to the sink $S_i$ have antagonistic interactions as well and form a strongly connected and structurally unbalanced subgraph. Then, it has been shown in \cite{xia2015structural} that all the eigenvalues of { $P_{ii}$} have magnitudes strictly less than one. 
        \end{itemize}

Let us consider the case when a sink $S_i$ is associated with one or more opinion leaders who are stubborn. Note that $S_i$ can belong to either $\mathcal{S}_{cp}$, $\mathcal{S}_{bal}$ or $\mathcal{S}_{unbal}$. In all of the cases, we know that if an opinion leader $j \in S_i$ is not stubborn, then $\sum_{k \in S_i}|p_{jk}|=1$. Obviously, $\sum_{k \in S_i}|p_{jk}|<1$ when $j$ is stubborn. Thus, the matrix $\tilde{P}_{(i+1)(i+1)}$ is row substochastic. Since the opinion leaders in sink $S_i$ are strongly connected, the spectral radius of $\rho(\tilde{P}_{(i+1)(i+1)})<1$ \cite{FB-LNS}. We deduce from Lemma \ref{lm:2} that $\rho(P_{(i+1)(i+1)})<1$. Thus, $P_{(i+1)(i+1)}$ associated with sink $S_i$ is convergent if one or more opinion leaders belonging in sink $S_i$ in $C(\mathcal{G})$ are stubborn.  

The preceding discussions imply that $P$ is semi-convergent { and not convergent} if and only if the network $\mathcal{G}$ possesses at least one opinion leader of type $\mathcal{V}_{o1}$ or a group of opinion leaders of type $\mathcal{V}_{o2}$ belonging to one or more sinks $S_i\in (\mathcal{S}_{bal}\cup \mathcal{S}_{cp})$ such that none of them is stubborn. Thus, it suffices to have a non-empty $\mathcal{S}_n$.
\end{proof}

Theorem \ref{lm:6} shows that $P$ is convergent when $\mathcal{S}_n=\emptyset$. Since $\rho(P)<1$ when $P$ is convergent, the steady state opinions are determined by taking the limit as $k \to \infty$ in eqn. \eqref{eq:opinion_dynamics} which results in:  
\begin{align}
\label{eqn:z_s_Sn_0}
 {\mathbf{z}}=(I-P)^{-1}\beta \mathbf{x}(0)   
\end{align}
where ${\mathbf{z}}=[z_1,...,z_n]=\lim_{k \to \infty}\mathbf{x}(k)$. The term $\beta \mathbf{x}(0)$ results in a vector with zero entries pertaining to the non-stubborn agents. Then, it follows from eqn. \eqref{eqn:z_s_Sn_0} that the final opinion of every agent in the network depends only on the initial opinions of the stubborn agents. 

{  On the other hand, if $\mathcal{S}_n \neq \emptyset$, then the opinions of the agents eventually converge to ${\mathbf{z}}$ given by,}
\begin{align}
\label{eqn:opinion_evol_With_stubbornness}
 {\mathbf{z}}=\lim_{k \to \infty}\bigg(P^k+\bigg(\sum_{i=0}^{k-1}P^i\bigg)\beta\bigg)\mathbf{x}(0).  
\end{align}

The final opinion vector to which the agents converge consists of two terms such that $\mathbf{z}=\mathbf{z_o}+\mathbf{z_s}$. Then,
\begin{itemize}
    \item {   $\mathbf{z_o}=\lim_{k \to \infty}P^k\mathbf{x}(0)$. Since $P$ is semi-convergent (and not convergent), $\mathbf{z}_o$ converges to,
\begin{equation}
    \label{eq:spanning_con}
        \mathbf{z}_o=\bigg(\sum_{j=1}^{|\mathcal{S}_n|} \mathbf{v}_j \mathbf{w}_j^T \bigg) \mathbf{x}(0)
    \end{equation}
    where  $\mathbf{w}_j$ and $\mathbf{v}_j$ are the left and right eigenvectors, respectively, of $P$ corresponding to $j^{th}$ eigenvalue $1$, such that $\mathbf{v}_{j}^T\mathbf{w}_j=1$. The left eigenvector $\mathbf{w}_j$ can be selected such that the nonzero entries of $\mathbf{w}_j$ correspond to the index of opinion leaders in the $j^{th}$ sink in $\mathcal{S}_n$ \cite{FB-LNS}. This implies that $\mathbf{z}_o$ takes into account the effect of the initial opinions of the opinion leaders that form the sinks in set $\mathcal{S}_n$ and is independent of the initial opinions of the stubborn agents. Hence,} we refer to it as the \textit{zero stubbornness response}.
    \item  $\mathbf{z_s}=\lim_{k \to \infty}(\sum_{i=0}^{k-1}P^i)\beta \mathbf{x}(0))$ takes into account the effect of initial opinions of the stubborn agents. We refer to it as the \textit{stubborn response}.
\end{itemize} 
  
\begin{remark}
\label{rem:4}
    Consider a network $\mathcal{G}$ devoid of any stubborn agents and structurally unbalanced sinks in $C(\mathcal{G})$. In this scenario, the opinions of agents in the entire network are governed by the cumulative effects of \textit{all} the opinion leaders in the network. {  Now, suppose we introduce stubbornness in the opinion leaders, then it follows from Theorem \ref{lm:6} and eqn. \eqref{eqn:z_s_Sn_0} that within a sink only the stubborn opinion leader(s) remain influential, while the other opinion leaders in the sink lose their influence.  } 
\end{remark} 

\begin{remark}
{ Theorem \ref{lm:6} highlights the interplay of network topology, signed interaction and stubbornness on influence of agents. The opinion leaders that form a group lose their influence if the interactions within the group do not follow Heider's Laws. It was shown in \cite{altafini,xia2015structural} that such agents converge to the `zero' opinion implying they are unable to make up their minds on the topic of decision. Studies on leadership such as \cite{BERNHEIM2020146} reveal a general preference of the electorate for decisive leaders. Therefore, the indecisive opinion leaders lose their influence.

Yet another scenario is when opinion leaders in a group lose their influence is if one or more opinion leaders in the group become stubborn. For instance, \textit{stubbornness} can be equivalent to the \textit{veto power} provided to the permanent members of the United Nations Security Council\cite{US_hegemony}. Once \textit{stubbornness} (or \textit{veto power}) is enforced, the opinions of the other members of the well-knit group become inconsequential. 

}\end{remark}
    
In this section, we discussed the role of the nature of interactions and stubbornness in deciding the influential agents in the network. Next, we quantify the influence each of these agents exert over the others in the network.
\section{Characterisation of influence using SFG}
\label{Sec5}
{ It follows from eqn. \eqref{eqn:opinion_evol_With_stubbornness} that the final opinions of agents depend on the initial states of the influential agents and the matrices $P$ and $\beta$. Therefore, the network interconnections and the stubbornness determine not only the influential agents but also the measure of their influence. In this work, we present a generalised framework to analyse the effects of network interconnections, their signs and stubbornness in a weakly connected signed network on the opinion formation process.}

Consider a network $\mathcal{G}$ with stubborn agents where the opinions of agents are governed by eqn. \eqref{eq:opinion_dynamics} such that $P$ is semi-convergent (and not convergent). By taking the limit as $k \to \infty$ in eqn. \eqref{vector_op_model}, it follows that,
\begin{equation}
\label{eqn:steady_state_OD}
 \mathbf{z}=P\mathbf{z}+\beta \mathbf{x}(0)   
\end{equation}
{  
Let $\mathbf{y}\in \mathbb{R}^{n+s}$ be defined as $\mathbf{y}=[y_1,...,y_{n+s}]=[z_1,...,z_n,x_{k_1}(0),...,x_{k_s}(0)]$ where $x_{k_h}(0)$ denotes the initial opinion of the stubborn agent $k_h$, $h\in\{1,2,..,s\}$ and $s$ is the number of stubborn agents in $\mathcal{G}$. Then, eqn. \eqref{eqn:steady_state_OD} can be re-arranged as,
\begin{align}
\label{eqn:y}
    \mathbf{y}=B\mathbf{y}
\end{align}
where matrix $B=[b_{ij}]$ is defined as,
\begin{align}
\label{eqn:matrix_B}
   B=\begin{bmatrix}
 P & \widetilde{\beta} \\
 0 & I_s
\end{bmatrix} 
\end{align}
with $\widetilde{\beta}\in \mathbb{R}^{n,s}$ which can be derived from eqn. \eqref{eqn:steady_state_OD}.}
\subsection{Construction of SFGs}
A set of linear equations of the form given in \eqref{eqn:y} can be represented by an \textit{SFG} $\mathcal{G}_s={ ({V}_s,B_s)}$ \cite{mason1956feedback}. In our framework, the nodes $\{1,2,...,n+s\}$ of $\mathcal{G}_s$ are associated with the node signals { $\{y_1,y_2,...,y_{n+s}\}$ corresponding to the final opinions of the $n$ agents and the initial opinions of $s$ stubborn agents. For any two nodes $p$ and $q$ in $\mathcal{G}_s$, the branch gain $g_{p,q}=b_{pq}$.} Note that the direction of the branches of $\mathcal{G}_s$ is the reverse of the flow of information in the network $\mathcal{G}$. 
The sources and sinks in $\mathcal{G}_s$ are nodes having only outgoing and only incoming branches, respectively. A non-source and a non-sink node in an SFG are nodes which are not source and sinks, respectively.

\begin{expm}
\label{expm:1}
 Consider the network $\mathcal{G}=(\mathcal{V},\mathcal{E},A)$ shown in {  Fig. \ref{fig:Network_with_cycles}} with edge weight equal to the entries of adjacency matrix. As defined in Sec \ref{subsec:COA}, the nodes $\{1,...,4\}$ are followers, and $\{5,...,$ $11\}$ are opinion leaders associated with the sinks $S_1,$ $S_2$ and $S_3$ of $C(\mathcal{G})$ as highlighted in Fig. \ref{fig:Network_with_cycles}. Consider agents $1$ and $6$ to be stubborn. The effectively structurally balanced sinks of $C(\mathcal{G})$ include: $\mathcal{S}_b=\{S_1,S_2,S_3\}$ and $\mathcal{S}_n=\{S_1,S_3\}$. The opinions of agents in the network evolve according to eqn. \eqref{eq:opinion_dynamics} for the initial opinions $\mathbf{x}(0)=[8.0,9.0,$ $6.0,5.0,7.0,$ $3.0,6.5,-10.0,7.0,0.3,2.5]$ with $\beta_1=0.3$ and $\beta_6=0.2$. It follows from eqn. \eqref{eqn:y}, that the SFG $\mathcal{G}_s$ derived from $\mathcal{G}$ has $13$ nodes associated with states $\{z_1,z_2,...,z_{11},x_1(0),x_6(0)\}$ with branch gains $g_{i,j}=b_{ij}$. The nodes $5,12$ and $13$ of the SFG form sources, while the rest form non-source nodes. Note that opinion leaders $9,10$ and $11$ are influential according to Theorem \ref{lm:6}. However, the nodes corresponding to their final opinions do not form sources in $\mathcal{G}_s$.
 
\begin{figure}[h]
    \centering
    \includegraphics[width=0.45\textwidth]{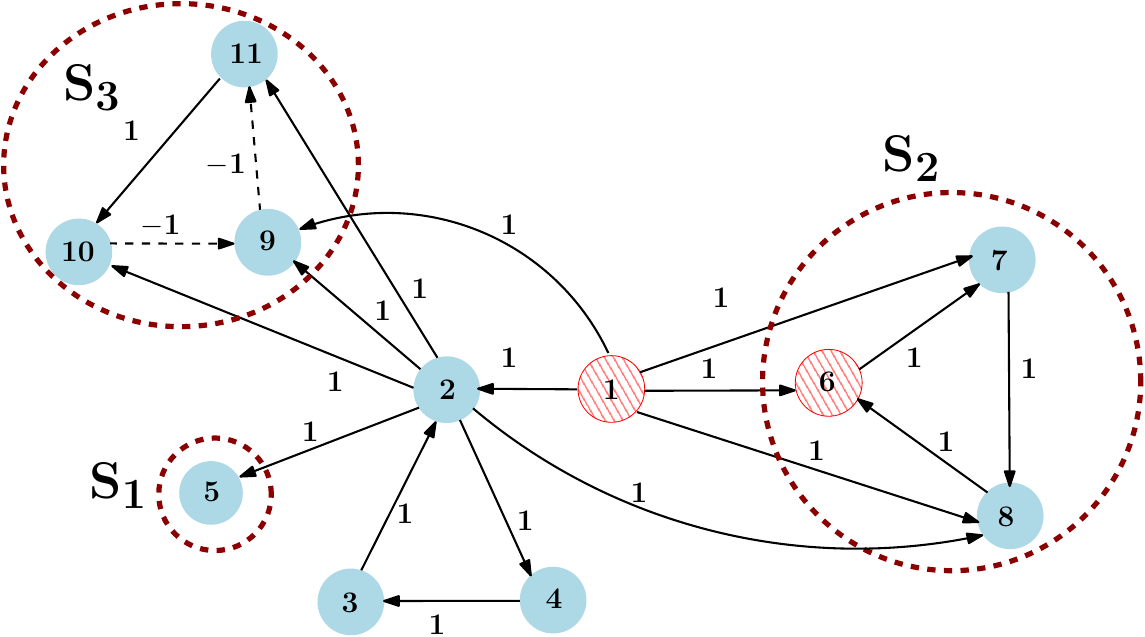}
    \caption{A weakly connected digraph with stubborn agents $1$ and $6$.}
    \label{fig:Network_with_cycles}
   \end{figure}
\end{expm}

\begin{remark}
\label{rem:SFG_Red}
{ An SFG is used in a multi-input, multi-output (MIMO) system to determine the effect of inputs (or sources) on different outputs (or sinks). The gain of an SFG for a given source and  sink pair is evaluated using eqn. \eqref{eq:sfg-gain}. 
In our framework, even groups of interacting opinion leaders can drive the opinions of all the agents in the network. In the constructed SFG $\mathcal{G}_s$, their final opinions do not form sources.  
Thus, in order to examine the influence of all such agents, we can construct reduced SFGs as discussed below. 
}
\end{remark}

In a group of $n$ agents connected over a weakly connected digraph $\mathcal{G}$ whose opinions evolve according to eqn. \eqref{eq:opinion_dynamics}, the SFG $\mathcal{G}_s=(V_s, B_s)$ derived from eqn. \eqref{eqn:y} is reduced to $\bar{\mathcal{G}}_s=(\bar V_s,\bar B_s)$ by the following steps.
We begin with the construction of node set $\bar{V}_s$ from $V_s$,
\begin{itemize}
    \item A node $i$ in $ \mathcal{G}_s$, which is a follower in $\mathcal{G}$ for $i\in\{1,...,$ $m\}$, forms a non-source node indexed $\bar i$ in $\bar{\mathcal{G}}_s$ with state $\bar{y}_i=y_i$.
    \item A node $i$ in $\mathcal{G}_s$, which is an opinion leader in $\mathcal{G}$ for $i\in\{m+1,...,n\}$, 
    \begin{itemize}
        \item if $i \in \mathcal{V}_{o1}$ in $\mathcal{G}$, it forms a source $\mathcal{O}_r$ with state $\bar{y}_{\mathcal{O}_r}=x_i(0)$.
        \item if $i \in \mathcal{V}_{o2}$ in $\mathcal{G}$ is associated with sink $S_l$ of $C(\mathcal{G})$ such that $S_l \in \mathcal{S}_{cp} \cap \mathcal{S}_n$, then all such $i$ with $\mathcal{S}^v(i)=S_l$ collectively form a source $\mathcal{O}_r$ in $\bar{\mathcal{G}}_s$ with state $\bar{y}_{\mathcal{O}_r}=(\mathbf{w}_{P}^{l+1})^T \mathbf{x}_{l+1}(0)$. 
        \item if $i \in \mathcal{V}_{o2}$ in $\mathcal{G}$ is associated with sink $S_l$ of $C(\mathcal{G})$ such that $S_l \in \mathcal{S}_{bal} \cap \mathcal{S}_n$, then all such $i$ with $\mathcal{S}^v(i)=S_l$ form two sources $\mathcal{O}_r$ and $\mathcal{O}_{r+1}$ in $\bar{\mathcal{G}}_s$ corresponding to the bipartition of $S_l$. The sources are associated with states $\bar{y}_{\mathcal{O}_r}=a$ and $\bar{y}_{\mathcal{O}_{r+1}}=-a$ where $a=(\mathbf{w}_{P}^{l+1})^T \mathbf{x}_{l+1}(0)$.
         \item if $i\in \mathcal{V}_{o2}$ in $\mathcal{G}$ is associated with sink $S_l$ of $C(\mathcal{G})$ such that either $i$ is stubborn or there exists a stubborn $h \in \mathcal{V}_{o2}$ such that $\mathcal{S}^v(h)=S_l$, then node $i$ forms a non-source node $\bar i$ with state $\bar{y}_i=y_i$.
          \item if $i \in \mathcal{V}_{o2}$ in $\mathcal{G}$ is associated with sink $S_l$ of $C(\mathcal{G})$ such that $S_l \in \mathcal{S}_{unbal}$ and none of the opinion leaders in $S_l$ is stubborn, then $i$ does not form a node in $\bar{\mathcal{G}}_s$.
        \end{itemize}
        \item Each node $i \in \mathcal{G}_s$ for $i\in\{n+1,...,n+s\}$ forms a source $\mathcal{O}_r$ in $\bar{\mathcal{G}}_s$ with state $\bar{y}_{\mathcal{O}_r}=y_i$.
    \end{itemize}
    where $l\in \{1,...,n_s\}$, $r\in\{1,...,|\mathcal{V}_{o1}|+s+|\mathcal{S}_{cp}|+2|\mathcal{S}_{bal}|-1\}$, $\mathbf{w}_{P}^{(l+1)}$ is the normalised left eigenvector corresponding to simple eigenvalue $1$ of the $(l+1)^{th}$ block of $P$ denoted by $P_{(l+1)(l+1)}$ and $\mathbf{x}_{l+1}(0)$ is the vector composed of initial opinions of opinion leaders in sink $S_l$.
    The following are the steps for determining the branches in $\bar B_s$ and their respective branch gains that exist between any two nodes in $\bar{V}_s$:
 \begin{itemize}
\item for any non-source node $\bar i \in \bar{V}_s$ and source $\mathcal{O}_r$: 
\begin{itemize}
    \item if $\mathcal{O}_r$ corresponds to an opinion leader $j$ in $\mathcal{V}_{o1}$, the branch gain of branch $(\mathcal{O}_r,\bar i)$ is $g_{\bar i,\mathcal{O}_r}=b_{ij}$.
    \item if $\mathcal{O}_r$ corresponds to opinion leaders in a sink $S_l \in \mathcal{S}_{cp}$, the branch gain of branch $(\mathcal{O}_r,\bar i)$ is $g_{\bar i,\mathcal{O}_r}=\sum_{h \in S_l}b_{ih}$.
    \item if $\mathcal{O}_r$ corresponds opinion leaders in a partition $\mathcal{V}_{lq}$ of sink $S_l \in \mathcal{S}_{bal}$ for $q\in\{1,2\}$, the branch gain of branch $(\mathcal{O}_r,\bar i)$ is $g_{\bar i,\mathcal{O}_r}=\sum_{h \in \mathcal{V}_{lq}}b_{ih}$.
    \item if $i$ is stubborn agent in $\mathcal{G}$ and $\mathcal{O}_r$ corresponds to its initial opinion, then the branch gain of branch $(\mathcal{O}_r,\bar i)$ is $g_{i,\mathcal{O}_r}=\beta_i$
\end{itemize}
    \item for any pair of non-source nodes $\bar i,\bar j \in \bar{V}_s$, the branch gain of branch $(\bar j, \bar i)$ is $g_{\bar i,\bar j}=b_{ij}$.
    \end{itemize}        
The derivation of the rules for constructing SFGs is given in Appendix. We illustrate the construction of the SFG $\bar{\mathcal{G}}_s$ using the following example.
\begin{expm}
\label{expm:3}
Consider the network $\mathcal{G}$ in Fig. \ref{fig:Network_with_cycles} whose SFG $\mathcal{G}_s$ was constructed in Example \ref{expm:1}. Now, we derive $\bar{\mathcal{G}}_s$. 
The node $i\in \{1,...,4\}$ in $\mathcal{G}_s$ corresponds to a follower so it forms a non-source node $\bar{i}$ in $\bar{\mathcal{G}}_s$. 
Node $5$ corresponds to opinion leader in $\mathcal{V}_{o1}$ so it forms source $\mathcal{O}_1$. 
The nodes $i\in\{6,7,8\}$ in $\mathcal{G}_s$ correspond to the opinion leaders in sink $S_2 \in \mathcal{S}_{cp}$ but since $6$ is stubborn they form non-source nodes $\bar i$. 
The nodes $i\in\{9,10,11\}$ correspond to opinion leaders that associated with sink $S_3\in \mathcal{S}_{bal}$ with $9\in \mathcal{V}_{31}$ and $10,11\in \mathcal{V}_{32}$. They form sources $\mathcal{O}_2$ and $\mathcal{O}_3$ respectively. Finally, the nodes $12$ and $13$ corresponding to the initial opinions of stubborn agents $1$ and $6$ form sources $\mathcal{O}_4$ and $\mathcal{O}_5$, respectively. 
The states corresponding to the nodes in $\bar{\mathcal{G}}_s$ are derived using the rules discussed above.

The branch gain of branch $(\bar j,\bar i)$ if $\bar i \in \{\bar 1,..,\bar 4\}$ and $\bar j=\mathcal{O}_1$ is $g_{\bar i,\mathcal{O}_1}=p_{i5}$. 
For $j\in \{\mathcal{O}_2,\mathcal{O}_3\}$, the branch gain $g_{\bar i,\mathcal{O}_2}=\sum_{h \in \mathcal{V}_{31}}p_{ih}$ and $g_{\bar i ,\mathcal{O}_3}=\sum_{h \in \mathcal{V}_{32}}p_{ih}$. The sources $\mathcal{O}_4$ and $\mathcal{O}_5$ 
have branches in $\bar{\mathcal{G}}_s$ with branch gain $g_{\bar 1,\mathcal{O}_4}=\beta_1$ and $g_{\bar 6,\mathcal{O}_5}=\beta_6$. For the branches between non-source nodes, $g_{\bar i,\bar j}=b_{ij}=p_{ij}$. { The reduced SFG $\bar{\mathcal{G}}_s$ constructed from $\mathcal{G}_s$ is given in Fig. \ref{FIG:Reduced_SFG}. To avoid confusion, the effect of each source is shown in the sub-figures \ref{Source_1_follower_1}-\ref{Source_5_follower_1} independent of the other sources (source $\mathcal{O}_4$'s SFG is omitted due to its simplicity). Further, in each sub-figure corresponding to a source only those non-source nodes are represented to which the source has a forward path (only these nodes will be affected by the source \cite{mason1956feedback}).  
}
\end{expm}
\begin{figure}    
\centering
\begin{subfigure}{1\linewidth}
       \centering
        \includegraphics[width=1\textwidth, height=2.5cm,keepaspectratio]{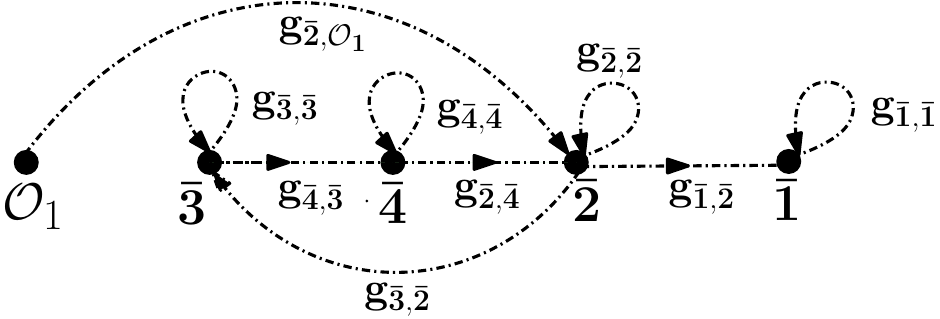}
        \caption{SFG with Source $\mathcal{O}_1$}
        \label{Source_1_follower_1}
 \end{subfigure}
    \begin{subfigure}{1\linewidth}
        \centering
        \includegraphics[width=1\textwidth, height=2.7cm,keepaspectratio]{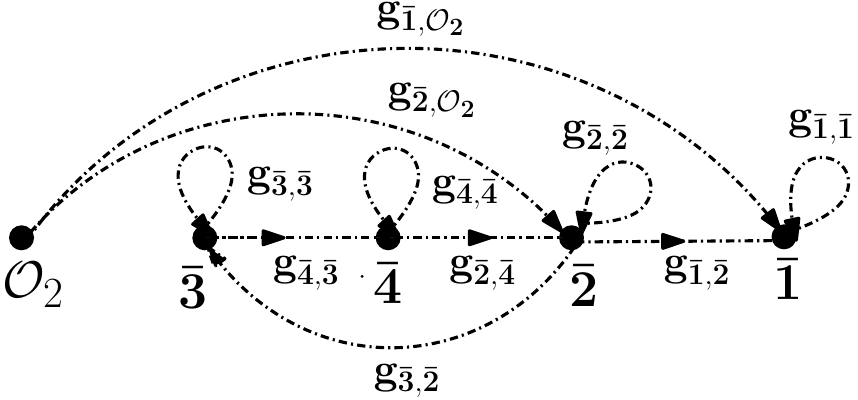}
        \caption{SFG with source $\mathcal{O}_2$}
        \label{Source_2_follower_1}
    \end{subfigure}
    \begin{subfigure}{0.45\textwidth}
        \centering
        \includegraphics[width=0.8\textwidth]{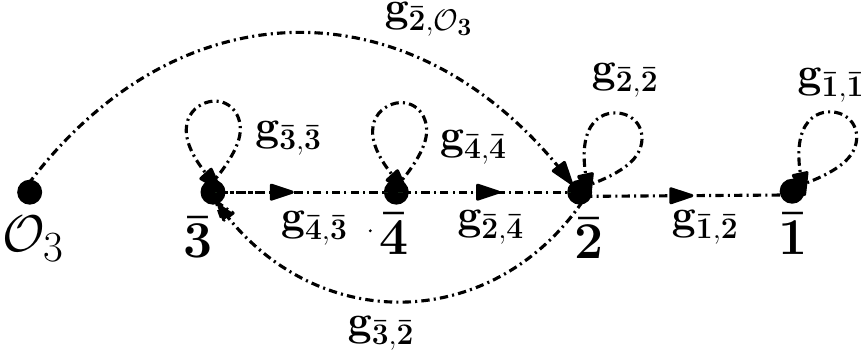}
        \caption{SFG with source $\mathcal{O}_3$}
        \label{Source_4_follower_1}
    \end{subfigure}
    \begin{subfigure}{0.5\textwidth}
        \centering
        \includegraphics[width=1\textwidth,height=2.7cm,keepaspectratio]{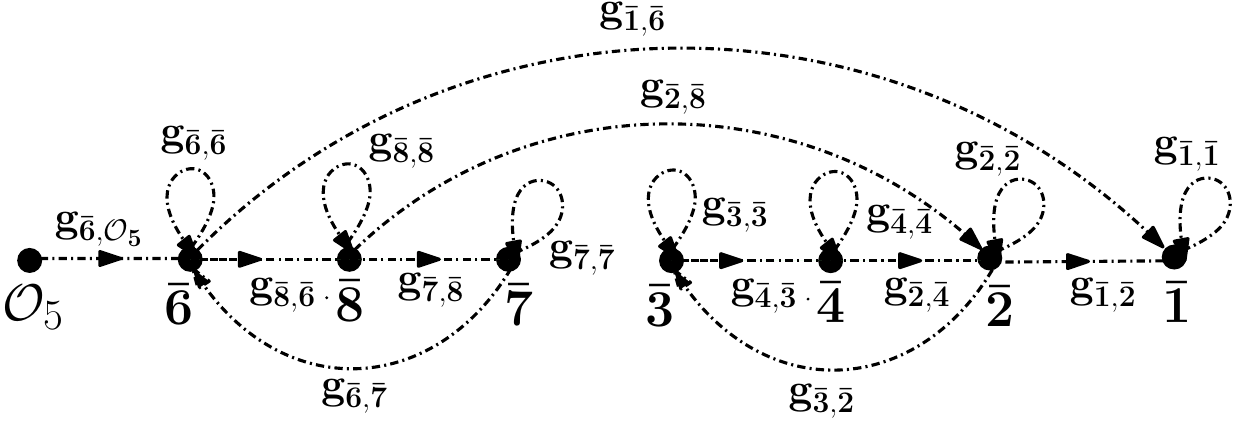}
        \caption{SFG with source $\mathcal{O}_5$}
        \label{Source_5_follower_1}
    \end{subfigure}
    \caption{The SFGs show the effects of each source on the rest of the nodes. Source $\mathcal{O}_4$ is excluded.}
    \label{FIG:Reduced_SFG}
\end{figure}

The above-mentioned procedure allows us to construct the reduced SFG $\bar{\mathcal{G}}_s$ from $\mathcal{G}_s$. { Each source is associated with an influential agent(s) whether stubborn or non-stubborn.} Next, we use the reduced SFG $\bar{\mathcal{G}}_s$ to determine the effect of sources constituted of influential agents on the rest of the nodes. 

\subsection{Quantification of Influence}
{ Like $\mathcal{G}_s$, the reduced SFG $\bar{\mathcal{G}}_s$ also represents a set of linear equations, so} it satisfies the \textit{superposition principle}. As a result, the effect of a source $\mathcal{O}_r$ on a node $\bar i$ in $\bar{\mathcal{G}}_s$
can be determined by considering the effect of the other sources to be zero as illustrated in \cite{mason1956feedback} where {  $\bar i\in\bar{V}_s$ and $r\in\{1,...,|\mathcal{V}_{o1}|+s+|\mathcal{S}_{cp}|+2|\mathcal{S}_{bal}|\}$. We know that the gain of the SFG is the signal obtained at a sink for per unit signal at the source applied at the source. If the effect of sources is to be determined at node $\bar i$ that has outgoing branches, it is not a sink in $\bar{\mathcal{G}}_s$. To determine the effect of source on $\bar i$, we link a new node $\delta_{i}$ to $\bar i$ via a branch $(\bar i,\delta_{i})$ with branch gain $g_{\delta_{i},\bar{i}}=1$. The node $\delta_{i}$ has no outgoing and only one incoming branch. Thus, it forms the sink $\delta_{i}$ and its node state value $\bar{y}_{\delta_{i}}$ is equal to $\bar{y}_i$.} 
The collective influence coefficient $c_{ir}$ is the gain of the SFG given by eqn. \eqref{eq:sfg-gain} when only source $\mathcal{O}_r$ is acting and $\bar i$ or $\delta_{i}$ is the sink. 

When the source $\mathcal{O}_r$ is composed of opinion leaders in $\mathcal{V}_{o2}$, the collective influence coefficient $c_{ir}$ gives an overall effect of all the opinion leaders in source $\mathcal{O}_r$ on $i$. It does not specify the individual influence of each opinion leader comprising the source. Thus, we define the individual influence coefficient $\theta_{ij}$ to measure the exact contribution of an influential agent $j$ on the final opinion of node $i$ which is determined in the following result.

\begin{thm}
\label{Thm:4}
Consider a group of $n$ agents connected over a weakly connected digraph $\mathcal{G}$. Under the opinion dynamics model given in eqn. \eqref{eq:opinion_dynamics}, the opinion of an agent $i\in \mathcal{V}$ eventually converges to 
$z_i=\sum_{j \in \mathcal{V} } \theta_{ij}x_j(0)$ where $\theta_{ij}$, the individual {  influence coefficient, is defined as},
\begin{align}
\label{eqn:IIC}
    \theta_{ij}=\begin{cases}
        c_{ir} & j \in \mathcal{V}_{o1}{  \cup \mathcal{V}_s} \\
        c_{ir}(\mathbf{w}_P^{l+1})_{\kappa}^T & j \in \mathcal{V}_{o2}, S_l \in \mathcal{S}_{cp} { \cap \mathcal{S}_n}\\
        (\sigma_{1}c_{ir}+\sigma_2c_{i(r+1)})(\mathbf{w}_{P}^{l+1})_{\kappa} & j \in \mathcal{V}_{o2}, S_l \in S_{bal}{ \cap \mathcal{S}_n} \\
        (\mathbf{w}_{P}^{l+1})_{\kappa} & i,j \in S_l \in \mathcal{S}_{cp} \cap \mathcal{S}_n \\
      \sigma_i (\mathbf{w}_{P}^{l+1})_{\kappa} & i,j \in S_l \in \mathcal{S}_{bal} \cap \mathcal{S}_n \\
        0 & \text{otherwise}
    \end{cases}
\end{align}
where $c_{ir}$ is the collective influence coefficient for source $\mathcal{O}_r$ and sink corresponding to node $\bar i$ in $\bar{\mathcal{G}}_s$, $\mathcal{O}_r$ is the source in $\bar{\mathcal{G}}_s$ associated with $j$, $r\in\{1,...,|\mathcal{V}_{o1}|+s+|\mathcal{S}_{cp}|+2|\mathcal{S}_{bal}|\}$, $S_l=\mathcal{S}^v(j)$, $\kappa=j-(m+\sum_{g=1}^{l-1}|S_g|)$,  and $\sigma_i\in\{1,-1\}$ depending on the partition of $S_l$ to which opinion leader $i$ belongs. 
\end{thm}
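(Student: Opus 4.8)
The plan is to read $z_i$ off the reduced signal-flow graph $\bar{\mathcal{G}}_s$. Since $\bar{\mathcal{G}}_s$ encodes a finite linear system of the form \eqref{eqn:y} whose source states are explicit linear combinations of the $x_j(0)$, the superposition principle expresses $z_i$ as a sum of contributions $c_{ir}\,\bar y_{\mathcal{O}_r}$, one per source $\mathcal{O}_r$; substituting each $\bar y_{\mathcal{O}_r}$ and collecting the coefficient of every $x_j(0)$ then reproduces the case list \eqref{eqn:IIC}. I would first split on whether $\mathcal{S}_n=\emptyset$ via Theorem \ref{lm:6}. If $\mathcal{S}_n=\emptyset$, then $P$ is convergent, $\mathbf{z}=(I-P)^{-1}\beta\mathbf{x}(0)$ is the unique solution of \eqref{eqn:steady_state_OD}, every node of $\mathcal{V}_{o1}$ is itself stubborn, and the sources of $\bar{\mathcal{G}}_s$ carry only the states $x_j(0)$, $j\in\mathcal{V}_s$, so only the first branch of \eqref{eqn:IIC} survives; otherwise $P$ is semi-convergent and the target node $i$ is handled in two cases.

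\emph{Case (i): $i$ is a non-source node of $\bar{\mathcal{G}}_s$} — a follower, a node of $\mathcal{V}_{o1}$ (for which $z_i=x_i(0)$ is recovered by linking $\delta_i$ directly to its own source, giving $\theta_{ii}=1$), or a node of $\mathcal{V}_{o2}$ whose sink contains a stubborn leader. The key point is that the subgraph of $\bar{\mathcal{G}}_s$ spanned by the non-source nodes is governed precisely by the submatrix of $P$ shown in the proof of Theorem \ref{lm:6} to have spectral radius strictly less than $1$; hence $\Delta\neq0$ in Mason's formula \eqref{eq:sfg-gain}, every gain $c_{ir}$ from a source $\mathcal{O}_r$ to the sink $\delta_i$ is well defined, and by linearity $z_i=\bar y_i=\sum_r c_{ir}\,\bar y_{\mathcal{O}_r}$. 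Substituting the source states supplied by the construction — $\bar y_{\mathcal{O}_r}=x_j(0)$ for a stubborn or a $\mathcal{V}_{o1}$ source, $\bar y_{\mathcal{O}_r}=\sum_{j\in S_l}(\mathbf{w}_P^{l+1})_\kappa x_j(0)$ for a source built from an $S_l\in\mathcal{S}_{cp}\cap\mathcal{S}_n$, and the antisymmetric pair $\bar y_{\mathcal{O}_r}=a=-\bar y_{\mathcal{O}_{r+1}}$ with $a=(\mathbf{w}_P^{l+1})^{T}\mathbf{x}_{l+1}(0)$ for a source built from an $S_l\in\mathcal{S}_{bal}\cap\mathcal{S}_n$ — and reading off the coefficient of each $x_j(0)$ yields the first three branches of \eqref{eqn:IIC}. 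Any $j$ that is not influential (a non-stubborn follower, a non-stubborn $\mathcal{V}_{o2}$ leader, or a non-stubborn leader of an $\mathcal{S}_{unbal}$ sink) appears in no source state, so $\theta_{ij}=0$.

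\emph{Case (ii): $i\in\mathcal{V}_{o2}$ lies in a sink $S_l\in\mathcal{S}_n$ with no stubborn leader.} Such an $i$ is a source of $\bar{\mathcal{G}}_s$, so instead I would use the decomposition $\mathbf{z}=\mathbf{z}_o+\mathbf{z}_s$ of \eqref{eqn:opinion_evol_With_stubbornness}. Because a sink of $C(\mathcal{G})$ has no outgoing edges and $S_l$ contains no stubborn agent, the $i$-th component of the stubborn response $\mathbf{z}_s$ vanishes, leaving $z_i=(\mathbf{z}_o)_i=\sum_{p=1}^{|\mathcal{S}_n|}(\mathbf{v}_p)_i\,\mathbf{w}_p^{T}\mathbf{x}(0)$. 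By the biorthogonality of the eigenvectors, only the term whose $\mathcal{S}_n$-sink equals $S_l$ has a nonzero $i$-th entry; on $S_l$ the right eigenvector $\mathbf{v}_p$ equals the all-ones vector when $S_l\in\mathcal{S}_{cp}$ and the bipartition sign vector when $S_l\in\mathcal{S}_{bal}$, while $\mathbf{w}_p$ restricted to $S_l$ coincides, under the normalisation $\mathbf{v}_p^{T}\mathbf{w}_p=1$, with $\mathbf{w}_P^{l+1}$. This gives $\theta_{ij}=(\mathbf{w}_P^{l+1})_\kappa$ (cooperative) or $\sigma_i(\mathbf{w}_P^{l+1})_\kappa$ (balanced) for $j\in S_l$, and $\theta_{ij}=0$ otherwise, i.e. the last three branches of \eqref{eqn:IIC}.

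I expect the main obstacle to be bookkeeping rather than anything conceptual. One has to verify that the normalisation of $\mathbf{w}_P^{l+1}$ used when building $\bar{\mathcal{G}}_s$ agrees with the normalisation $\mathbf{v}_p^{T}\mathbf{w}_p=1$ used for $\mathbf{z}_o$; that the index $\kappa=j-(m+\sum_{g=1}^{l-1}|S_g|)$ selects the entry of the block eigenvector attached to agent $j$; and, in the structurally balanced case, that the sign factors $\sigma_1,\sigma_2$ premultiplying $c_{ir}$ and $c_{i(r+1)}$ are exactly the $\pm1$ partition labels produced by the factorisation $P_{(l+1)(l+1)}=D\bar P D$ with $D=\mathrm{diag}(\pm1)$, so that $(\mathbf{w}_P^{l+1})_\kappa$ already carries the sign of $j$'s partition and $a$ is expanded accordingly. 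The one step that leans on earlier material beyond elementary linear algebra is the assertion that $\bar{\mathcal{G}}_s$ faithfully represents \eqref{eqn:y} once the opinion leaders of $\mathcal{S}_n$-sinks and the non-stubborn $\mathcal{S}_{unbal}$ leaders have been eliminated — which is precisely what the reduction rules derived in the Appendix establish.
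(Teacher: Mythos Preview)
Your proposal is correct and follows essentially the same route as the paper: express $z_i$ via Mason's gain as $\bar y_i=\sum_r c_{ir}\bar y_{\mathcal{O}_r}$, substitute the explicit source states, and read off the coefficient of each $x_j(0)$. The paper's proof is terser --- it handles the case $i\in S_l\in\mathcal{S}_n$ by appealing directly to Lemma~\ref{Thm:3} rather than via the eigenvector decomposition of $\mathbf{z}_o$, and it does not separately check $\Delta\neq0$ or the $\mathcal{S}_n=\emptyset$ sub-case --- but the argument is the same in substance.
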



Theorem \ref{Thm:4} quantifies the influence of each influential agent on every agent in the network. The preceding analysis establishes a relation between the network topology (the paths and cycles gain) and the influence of an agent by the use of SFG in the determination of influence. 
We illustrate these results through the following example.
\begin{expm}
\label{expm:4}
{   For the given network in Fig. \ref{fig:Network_with_cycles}, the reduced SFG $\bar{\mathcal{G}}_s$ was constructed in Example \ref{expm:3}. In this example, we determine the gain of the reduced SFG $\bar{\mathcal{G}}_s$ to quantify the influence distribution. 
}
Let us quantify the influence exerted by the influential agents $\{5,6,8,9,10\}$ in $\mathcal{G}$ on agent $1$.
{ If node $\bar 1$ in $\bar{\mathcal{G}}_s$ has a self-loop, then a sink ${\delta_{1}}$ is added such that $g_{\delta_{1},\bar 1}=1$.} For each source $\mathcal{O}_r$ associated with influential agent(s) and sink ${\delta_{1}}$, coefficient { $c_{1h}$} is determined from eqn. \eqref{eq:sfg-gain} in terms of branch-weights. The gain of the SFG for sources $\mathcal{O}_1$ and $\mathcal{O}_2$ and sink ${\delta_{1}}$ is given as follows:
\begin{subequations}
\begin{align}
\begin{split}
 c_{ 1 1}&=\frac{g_{\bar 2, \mathcal{O}_1}g_{ \bar 1,\bar 2}( 1-g_{\bar 3,\bar 3}-g_{\bar 4,\bar 4}+g_{\bar 3,\bar 3}g_{\bar 4,\bar 4})}{\Delta_{ 1}}  
\end{split}\\
\begin{split}
c_{ 1 2}&=\frac{g_{\bar 2,\mathcal{O}_2}g_{ \bar 1,\bar 2}( 1-g_{\bar 3,\bar 3}-g_{\bar 4,\bar 4}+g_{\bar 3,\bar 3}g_{\bar 4,\bar 4})+g_{\bar  1,\mathcal{O}_2}\Delta_2}{\Delta_1}
\end{split}
\end{align}
\label{eqn:CII}
\end{subequations}
where $\Delta_{1}=\bar 1-(g_{\bar 1,\bar 1}+g_{\bar 2,\bar 2}+g_{\bar 3,\bar 3}+g_{\bar 4,\bar 4}+g_{\bar 4,\bar 3}g_{\bar 3,\bar 2}g_{\bar 2,\bar 4})+$ 
 $(g_{\bar 1,\bar 1}g_{\bar 2,\bar 2}+g_{\bar 2,\bar 2}g_{\bar 3,\bar 3}+g_{\bar 1,\bar 1}g_{\bar 4,\bar 4}+g_{\bar 1,\bar 1}g_{\bar 3,\bar 3}+g_{\bar 4,\bar 4}g_{\bar 2,\bar 2}+g_{\bar 3,\bar 3}g_{\bar 4,\bar 4}+g_{\bar 1,\bar 1}g_{\bar 4,\bar 3}g_{\bar 3,\bar 2}g_{\bar 2,\bar 4})-(g_{\bar 1,\bar 1}g_{\bar 2,\bar 2}g_{\bar 3,\bar 3}+g_{\bar 1,\bar 1}g_{\bar 2,\bar 2}g_{\bar 4,\bar 4}+g_{\bar 1,\bar 1}g_{\bar 4,\bar 4}g_{\bar 3,\bar 3}+g_{\bar 4,\bar 4}g_{\bar 2,\bar 2}g_{\bar 3,\bar 3})+g_{\bar 1,\bar 1}g_{\bar 2,\bar 2}g_{\bar 3,\bar 3}g_{\bar 4\bar 4}$ and 
$\Delta_2= 1-g_{\bar 4,\bar 4}-g_{\bar 3,\bar 3}-g_{\bar 2,\bar 2}-g_{\bar 4,\bar 3}g_{\bar 2,\bar 4}g_{\bar 3,\bar 2}+g_{\bar 4,\bar 4}g_{\bar 3,\bar 3}+g_{\bar 2,\bar 2}g_{\bar 4,\bar 4}+g_{\bar 3,\bar 3}g_{\bar 2,\bar 2}-g_{\bar 2,\bar 2}g_{\bar 3,\bar 3}g_{\bar 4,\bar 4}$.
Similarly, the coefficients $c_{ir}$ are derived for each pair of source and sink (corresponding to each non-source node) in $\bar{\mathcal{G}}_s$.
The values of the collective influence coefficients $c_{ir}$ are given in Table \ref{table:1}.
\begin{table}[h!]
\centering
\begin{tabular}{||c| c c c c c c c||} 
 \hline
 Source & 1 & 2 & 3 & 4 & 6 & 7 & 8\\ [0.5ex] 
 \hline\hline
$\mathcal{O}_1$ & 0.02 & 0.2 & 0.2 & 0.2 & 0 &0 &0 \\ 
$\mathcal{O}_2$  & 0.12 & 0.2 & 0.2 & 0.2& 0 &0 &0  \\
$\mathcal{O}_3$  & 0.04 & 0.4 & 0.4 &0.4& 0 &0 &0 \\
$\mathcal{O}_4$& 0.5 & 0 & 0 &0& 0 &0 &0\\ 
$\mathcal{O}_5$& 0.32 & 0.2 & 0.2 &0.2& 1 &1 &1\\ [1ex] 
 \hline
\end{tabular}
\caption{Collective influence coefficient $c_{ir}$ for source $\mathcal{O}_r$ and node $i$ in $\bar{\mathcal{G}}_s$ }
\label{table:1}
\end{table}

Next, we evaluate the individual influence coefficient $\theta_{ij}$ for $i,j \in \mathcal{V}$ using the collective influence coefficients in Table \ref{table:1}.
For $i\in\{1,...,4,6,7,8\}$, influence of stubborn agent $1$ is $\theta_{i1}=c_{i4}$, influence of opinion leader $5$ is $\theta_{i5}=c_{i1}$ and influence of stubborn agent $6$ is $\theta_{i6}=c_{i5}$.
Since sink $S_3$ is structurally balanced, thus, the individual effect of an opinion leader $j \in \{9,10,11\}$ is $\theta_{ij}=(c_{i2}-c_{i3})(w_P^{4})_\kappa, \kappa\in \{1,2,3\}$ where $w_P^{4}=[0.2941, -0.3137,-0.3922]$. For opinion leaders $i,j \in S_3$, the coefficient $\theta_{ij}=\sigma_{i}(\mathbf{w}_P^4)_{\kappa}$, where $\sigma_9=1,\sigma_{10}=\sigma_{11}=-1$.
We calculate the final opinion ${z}_1$ of follower $1$ equal to ${z}_1=\sum_{j \in \mathcal{V}}{\theta}_{1j}x_{j}(0)=0.5 \times8 + 0.02\times 7+0.02\times 7-0.02\times0.3 -0.03 \times 2.5  +0.32 \times 3 =5.15$ as obtained in numerical simulation in Fig. \ref{fig:sim_2}. Similarly, it holds for the other agents as well. 
Fig. \ref{fig:sim_2} shows that the opinions in $\mathcal{G}$ that are governed by opinion model \eqref{eq:opinion_dynamics} eventually converge.
\end{expm}
\begin{figure}[h]
    \centering
    \includegraphics[width=0.5\textwidth]{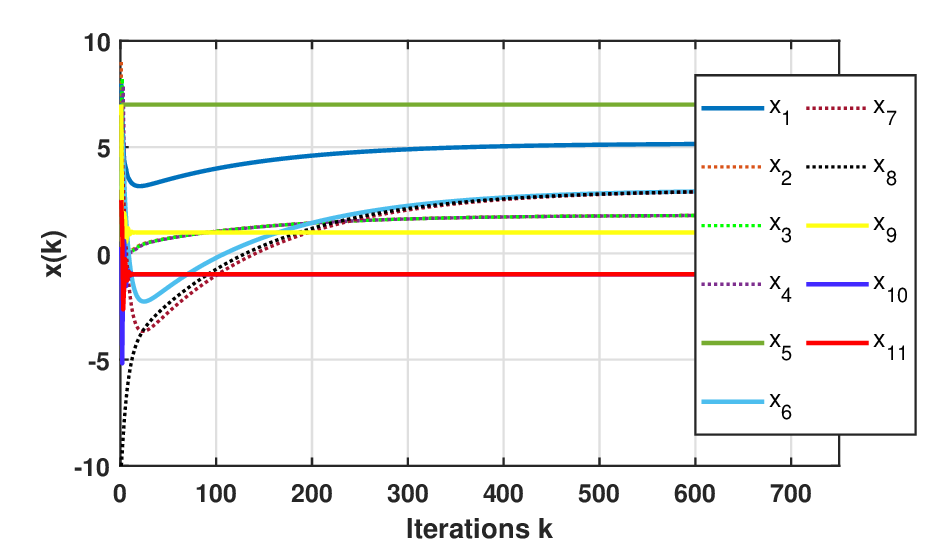}
    \caption{The evolution of opinions in the network given in Fig. \ref{fig:Network_with_cycles} governed by eqn. \eqref{eq:opinion_dynamics}. The opinions of influential agents are indicated by solid lines and the rest by dotted lines.}
    \label{fig:sim_2}
\end{figure}

 \begin{remark}
    The SFG obtained for a source $\mathcal{O}_r$ (composed of the opinion leaders of a sink $S_l$) is a graphical depiction of the flow of influence within the network. We know that the gain of the SFG depends on the magnitude of signed interactions along the directed paths. \textit{The influence of a source on a follower can be increased (decreased) directly by increasing (decreasing) the magnitudes of the branch gains along the forward path.} The SFG gain also depends on the cycles along the directed paths from the source to the follower. \textit{Hence, the influence of the source can also be increased by increasing the magnitudes of the loop gains of the positive cycles associated with the corresponding SFG.}
\end{remark} 

{ \begin{expm}
\label{expm:change}
    Suppose the communication among the agents change causing a change in the signs of the edges $(1,6)$ and $(2,10)$ in the network $\mathcal{G}$ shown in Fig. \ref{fig:Network_with_cycles}. The opinions at $k^{th}$ instance and final opinions after the change in the network are represented by $\hat{\mathbf{x}}(k)$ and $\hat{\mathbf{z}}$, respectively. Now, if we construct the reduced SFG $\bar{\mathcal{G}}_s$ from this new network, the sources and their associated states do not change. Only the branch gain of branches $g_{6,1}$ and $g_{\mathcal{O}_3,2}$ change.  Since the nodes $6,7$ and $8$ are affected only by $\mathcal{O}_5$ and the altered branch does not lie in the path, it follows that their opinions do not change.  Finally, the deviation in the opinions of nodes $1,...,4$ is presented in Fig. \ref{fig:sim_3}. The absolute deviation in the opinions of agents due to the change comes out to be  ${\sum_{i=1}^{n}(|z_i-\hat{z}_i|)/n}=0.15$ where $\hat{z}_i$ denote final opinion of $i$ after the alteration of signs in the network.

\begin{figure}[h]
    \centering
    \includegraphics[width=0.5\textwidth]{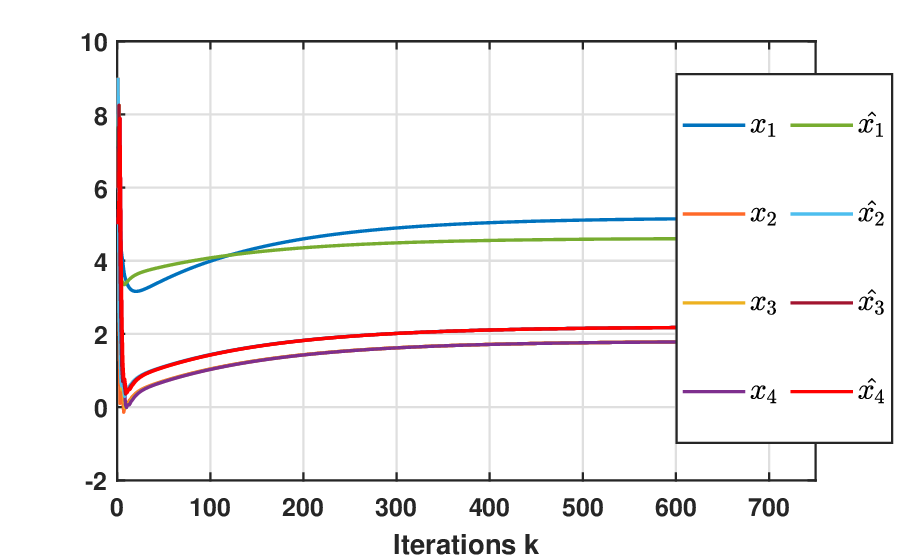}
    \caption{Effect of sign switching in network on agent's opinions.}
    \label{fig:sim_3}
\end{figure}
\end{expm}

\begin{remark}
In Example 5, we observe that even the simple modifications in network structures, which do not affect the nature of any influential node, can result in noticeable differences in the final opinion. 
This implies that both the initial opinions of influential agents and the network interactions along with their signs play an important role in determining the agents' influence. In real-world scenarios, when the objective is to modify the opinions in a network to achieve a reduction in polarisation, conflicts, \textit{etc}, changing an influential agent's opinion might not be feasible. For example, social media organisations try to control the interactions among individuals using recommender systems; effectively, they are modifying the underlying network structures \cite{racz2023towards}.
\end{remark}}
Till now, we have discussed the impact of the network topology, signed interactions and stubbornness in deciding the influence of influential agents on the final opinions of others in the network. Next, we quantify their effect on the degree of influence of the influential agents on the overall opinion formation among agents in the network. 
\section{Absolute Influence centrality}
\label{Sec6}
In this subsection, we propose a centrality measure to determine the overall contribution of an influential agent in the final opinions of agents in the network. We employ the influence coefficient derived in Theorem \ref{Thm:4} and construct a matrix $\Theta=[\theta_{ij}]$.  Since the entries of $\Theta$ account for the effects of the initial opinions of the influential agents on the final opinion vector $\mathbf z$ of the agents. 
Thus,
\begin{align}
\label{eq:inf_x0}
    \mathbf{z}=\Theta\mathbf{x}(0)
\end{align}
\begin{figure}[t]
    \centering
    \includegraphics[width=0.45\textwidth]{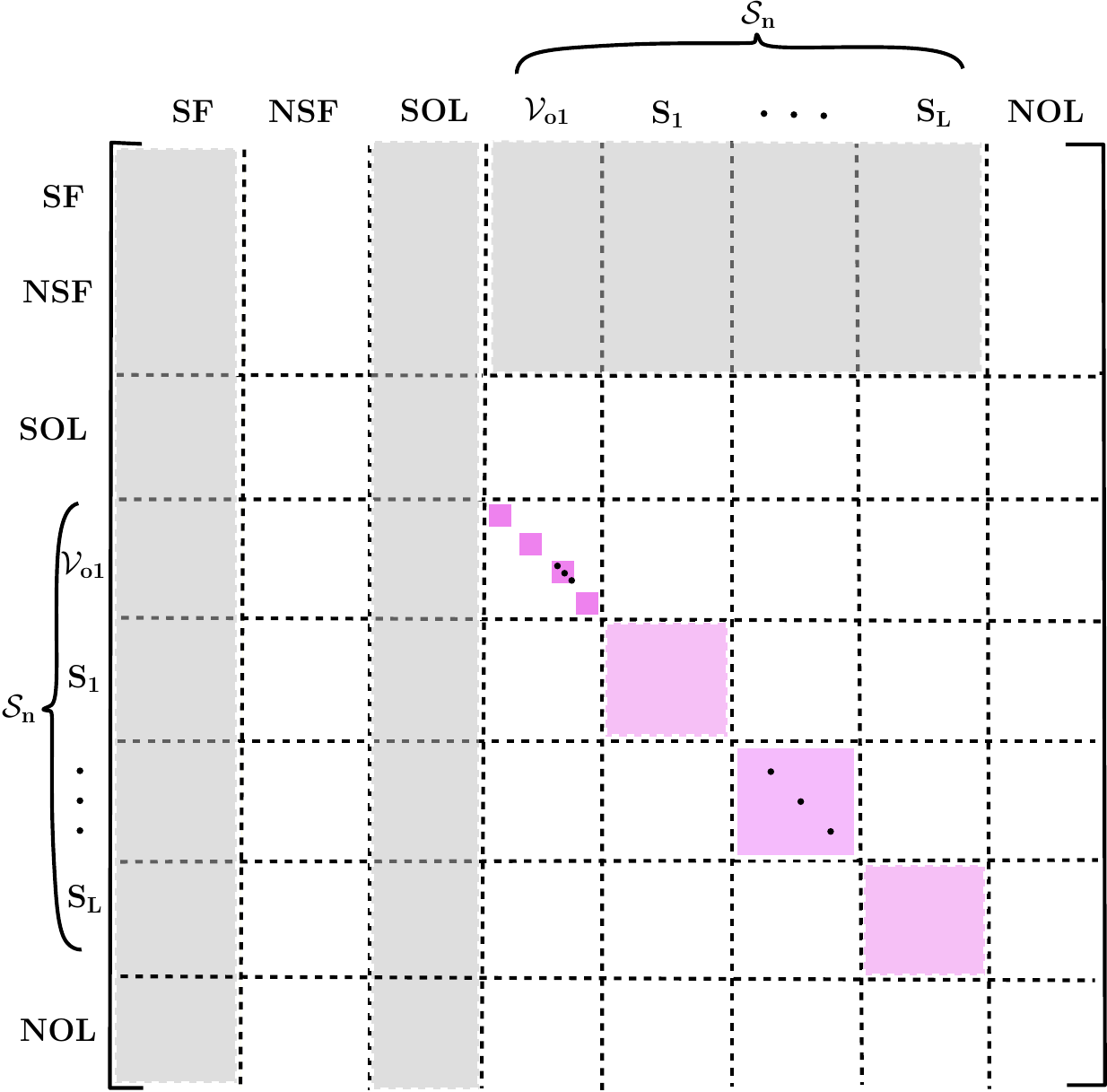}
    \caption{A representation of $\Theta$ to see who influences whom. In the figure, $SF$ stands for stubborn followers, $NSF$ for non-stubborn followers, $SOL$ for stubborn opinion leaders and $NOL$ for non-influential opinion leaders.}
    \label{fig:influence_matrix}
\end{figure}
Fig. \ref{fig:influence_matrix} depicts the non-zero pattern of influence matrix $\Theta$ for any network. 
In the grey region in the figure, $\theta_{ij}$ can be zero or non-zero depending on network connectivity. The entries in pink represent the regions where $\theta_{ij} \neq 0$ implying that $j$ definitely influences $i$. The entries in white signify zero influence of $j$ on $i$ \textit{i.e.} $\theta_{ij}=0$. Now, owing to the antagonism in the network, the influence of an agent may be positive or negative. Thus, we quantify the influence in the network using the matrix $\Tilde{\Theta}=[|\theta_{ij}|]$ where $|\theta_{ij}|$ is the absolute value of $(i,j)^{th}$ entry of $\Theta$.


%
{  An \textit{influential} agent dictates the final opinions of one or more agents in the network. Its degree of influence} is determined by the extent to which it alters the final opinion pattern of the network for a unit change in its initial opinion. For a given network $\mathcal{G}$ and the initial opinion pattern $\mathbf{x}(0)$, if the initial opinion $x_i(0)$ of an agent $i \in \mathcal{V}$ deviates to $x_i(0)+\delta$ resulting in a new pattern of initial opinion denoted by $\mathbf{x}^i(0)$. The change in final opinion is given by, 
\begin{align}
\label{infl_centrality}
   \Delta \mathbf{z}^i= \frac{\|\mathbf{z}^i-\mathbf{z}\|_1}{|\delta|}.
\end{align}
where $\mathbf{z}$ and $\mathbf{z}^i$ are the final opinion patterns for $\mathbf{x}(0)$ and $\mathbf{x}^i(0)$, respectively.

We define the absolute influence centrality of an agent $i$ as $\Delta \mathbf{z}^i$, which is the deviation caused by agent $i$ in the final opinion pattern. The most influential agent is the one that causes the maximum deviation. Thus, an agent $p$ is the most influential agent if,
{ \begin{align*}
p = \arg \max_{i \in \mathcal{V}}\Delta \mathbf{z}^i
\end{align*}}
The following result relates absolute influence centrality with the $\Theta$ matrix.
\begin{thm}
\label{thm:centrality}
    Consider a network of $n$ agents connected over a weakly connected graph whose opinions evolve by the proposed opinion model \eqref{eq:opinion_dynamics} in the presence of stubbornness. The measure of the influence centrality of all the agents in the network is given by the \textit{absolute influence centrality} $\Tilde{\Theta}^T \mathbb{1}_n$.
\end{thm}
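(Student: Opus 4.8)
The plan is to unwind the definition of $\Delta\mathbf{z}^i$ using the linear representation $\mathbf{z}=\Theta\mathbf{x}(0)$ established in eqn. \eqref{eq:inf_x0}, which itself follows from Theorem \ref{Thm:4} (each $z_i=\sum_{j}\theta_{ij}x_j(0)$). First I would fix an agent $i\in\mathcal{V}$ and consider the perturbed initial opinion vector $\mathbf{x}^i(0)=\mathbf{x}(0)+\delta\mathbf{e}_i$, where $\mathbf{e}_i$ is the $i$-th standard basis vector. Since the map $\mathbf{x}(0)\mapsto\mathbf{z}$ is linear with matrix $\Theta$, we get $\mathbf{z}^i-\mathbf{z}=\Theta(\mathbf{x}^i(0)-\mathbf{x}(0))=\delta\,\Theta\mathbf{e}_i$, i.e. $\mathbf{z}^i-\mathbf{z}$ is exactly $\delta$ times the $i$-th column of $\Theta$. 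This is the crucial observation that makes the centrality independent of both the particular $\mathbf{x}(0)$ and the particular $\delta$.

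Next I would substitute this into eqn. \eqref{infl_centrality}:
\begin{align*}
\Delta\mathbf{z}^i=\frac{\|\mathbf{z}^i-\mathbf{z}\|_1}{|\delta|}=\frac{\|\delta\,\Theta\mathbf{e}_i\|_1}{|\delta|}=\|\Theta\mathbf{e}_i\|_1=\sum_{k=1}^{n}|\theta_{ki}|.
\end{align*}
Here I use absolute homogeneity of the $1$-norm ($\|\delta v\|_1=|\delta|\,\|v\|_1$) to cancel $|\delta|$, and then the definition of the $1$-norm of the column vector $\Theta\mathbf{e}_i=(\theta_{1i},\ldots,\theta_{ni})^T$. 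Recognising $\sum_{k=1}^{n}|\theta_{ki}|$ as the $i$-th row sum of $\tilde\Theta^T$ (equivalently the $i$-th column sum of $\tilde\Theta=[|\theta_{ij}|]$), and noting that stacking these scalars over $i\in\mathcal{V}$ into a column vector is precisely the matrix-vector product $\tilde\Theta^T\mathbb{1}_n$, yields the claimed formula: the $i$-th entry of $\tilde\Theta^T\mathbb{1}_n$ equals $\Delta\mathbf{z}^i$, the absolute influence centrality of agent $i$.

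The argument is essentially a two-line computation, so there is no substantial obstacle; the only points requiring a modicum of care are (i) justifying that $\mathbf{z}=\Theta\mathbf{x}(0)$ holds for \emph{all} initial opinion vectors (not just the nominal one), which is exactly what Theorem \ref{Thm:4} provides since $\theta_{ij}$ depends only on the network data $P,\beta$ and the SFG gains, not on $\mathbf{x}(0)$; and (ii) being explicit that the perturbation $x_i(0)\mapsto x_i(0)+\delta$ changes only the $i$-th coordinate of $\mathbf{x}(0)$, so that the difference picks out the $i$-th column of $\Theta$ and hence the $i$-th column sum of $\tilde\Theta$. I would close by remarking that the result shows the centrality vector is well-defined (independent of $\delta\neq 0$ and of $\mathbf{x}(0)$) and is computable directly from $\Theta$ without any further simulation.
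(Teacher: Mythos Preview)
Your proposal is correct and follows essentially the same route as the paper: perturb the $i$-th coordinate of $\mathbf{x}(0)$, use the linearity $\mathbf{z}=\Theta\mathbf{x}(0)$ to identify $\mathbf{z}^i-\mathbf{z}$ with $\delta$ times the $i$-th column of $\Theta$, and then take the $1$-norm to obtain $\sum_{k}|\theta_{ki}|=(\tilde\Theta^T\mathbb{1}_n)_i$. If anything, your write-up is slightly cleaner in handling the $|\delta|$ cancellation and in explicitly noting that $\Theta$ is independent of $\mathbf{x}(0)$.
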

\begin{proof}
In order to determine the influential nodes, we perturb the initial positions of the agents one after the other. For an initial opinion pattern $\mathbf{x}(0)$, suppose the initial opinion of the $i^{th}$ is perturbed by $\delta$, then we represent the new initial opinion pattern as $\mathbf{x}^i$. The antagonism in the network can cause the influence of an agent to be positive or negative. So, as discussed before, we characterise the influence using $\tilde\Theta$ wherein the effect of each agent $j$ on an agent $i$ is given by $|\theta_{ij}|$. We know from eqn. \eqref{eq:inf_x0} that the \textit{absolute} change brought about in this process is $\|\tilde\Theta(\mathbf{x}^i(0)-\mathbf{x}(0))\|_1=[|\Theta_{1i}||\delta| ~ |\Theta_{2i}||\delta| ~ ... ~ |\Theta_{ni}||\delta|]^T$. For a unit change ($\delta=1$) in agent $i$'s initial condition, the overall change in the network is given by $\sum_{k=1}^n |\Theta_{ki}|$. Therefore, the absolute centrality vector for all the agents becomes $\tilde\Theta\mathbb{1}_n$. Hence, proved.
\end{proof}

The above formulation allows us to determine the most influential agent which is simply the one with the highest \textit{absolute influence centrality}. { As opposed to IC, the proposed centrality measure is applicable for signed networks and accounts for the influence of non-stubborn agents as well.} 

\begin{expm}
\label{ex:centrality}
We define the matrix $\Theta$ for the network topology in Fig. \ref{fig:Network_with_cycles} whose influence coefficients were determined in Example \ref{expm:4}. 
$\Theta$ takes the following form:
\begin{align*}
   \Theta= \left[ \begin{smallmatrix}
    0.5 & 0  & 0  & 0 & 0.02 & 0.32 & 0 & 0& 0.023 & -0.0251 & -0.0314 \\
    0 & 0 & 0 & 0 &  0.2 &  0.2 & 0 & 0  & -0.058 & 0.0628 &  0.0784 \\
    0 & 0 & 0 & 0 &  0.2 &  0.2 & 0 & 0  & -0.058 & 0.0628 &  0.0784 \\
    0 & 0 & 0 & 0 &  0.2 &  0.2 & 0 & 0  & -0.058 & 0.0628 &  0.0784 \\
    0 & 0 & 0 & 0 & 1 & 0 & 0 & 0 & 0 & 0 & 0 \\
    0 & 0 & 0 & 0 & 0 & 1 & 0 &0 & 0 & 0 & 0 \\
    0 & 0 & 0 & 0 & 0 & 1 & 0 &0 & 0 & 0 & 0 \\
    0 & 0 & 0 & 0 & 0 & 1 & 0 &0 & 0 & 0 & 0 \\
    0 & 0 & 0 & 0 & 0 & 0 & 0 &0 &0.29 & -0.31 & -0.39  \\
    0 & 0 & 0 & 0 & 0 & 0 & 0 &0 &0.29 & -0.31 & -0.39  \\
    0 & 0 & 0 & 0 & 0 & 0 & 0 &0 &0.29 & -0.31 & -0.39  
\end{smallmatrix} \right]
\end{align*}

The absolute centrality for the network, as defined in Theorem \ref{thm:centrality}, is $[0.5, 0,0,0,1.62,3.92,0,0,1.08,1.15,1.44]$. It shows that the stubborn opinion leader $6$ is the most influential agent in the network, followed by non-stubborn opinion leaders $5,11,10,9$ and stubborn follower $1$. Since the opinion leader $6$ in the sink $S_2$ is stubborn, we observe that the non-stubborn opinion leaders $7$ and $8$ in $S_2$ have zero influence. Interestingly, stubbornness has enabled the followed $1$  to become influential; it now ranks right after the opinion leader $9$ in terms of its absolute centrality score. 
\end{expm}
\section{Conclusions}
\label{sec:con}
%
%

The paper quantifies the influence of the opinion leaders and the stubborn agents (if any) on the rest of the agents in a network $\mathcal{G}$ with signed interactions. 
We examine the complex interplay of signed interactions, the network topology and stubbornness in the opinion formation process. The underlying convergence analysis reveals that a non-stubborn opinion leader connected to a stubborn opinion leader cannot be influential even to its own followers. Using this analysis, the work posits the conditions an opinion leader must satisfy to be influential in the presence of stubbornness. Based on the network topology, we lay down the rules to construct the SFGs associated with $\mathcal{G}$. Thereafter, the precise values of the influence distributed over the network is calculated by using the Mason's gain formula. Using the SFGs, we establish that the gains of paths and the loops that form the underlying network determine the degree of influence of an influential agent on the rest. This finding is illustrated and verified using examples in the paper.

Further, we determine the most influential agent in the network as the one that results in the maximum deviation in the final opinion vector for a unit deviation in the agent's initial opinion. This leads to a new centrality measure, called the \textit{absolute influence centrality}, which allows us find the overall influence of an agent in the network. To the best of our knowledge, the proposed centrality measure is the first that accounts for the impact of stubborn behaviour and opinion leaders simultaneously along with signed interactions in the network. In future, we plan to analyse the suitable modifications of a network that can alter the influence of certain chosen agents in a desired manner. 

\section{Appendix}

\subsection{Rules for construction of SFGs}
We have constructed $\mathcal{G}_s$ from $\mathcal{G}$. 
Next, we use the following properties of agents in $\mathcal{G}$ to reduce the SFG $\mathcal{G}_s$ to $\bar{\mathcal{G}}_s$.
\begin{lemma}[\cite{degroot1974reaching},\cite{xia2015structural}]
\label{Thm:3}
    Consider a group of $n$ agents connected over a weakly connected digraph $\mathcal{G}$ whose opinions evolve according to eqn. \eqref{eq:opinion_dynamics}. {  If an  opinion leader $i$ is associated with a sink $S_l$ which does not contain any stubborn opinion leader}, then its opinion eventually converges to:
    \begin{itemize}
       \item $x_i(0) \ \forall \ i \in \mathcal{V}_{o1}$,
           \item $(\mathbf{w}_{P}^{l+1})^T \mathbf{x}_{l+1}(0) \ \forall \ i \in \mathcal{V}_{o2}$ such that $S_l\in \mathcal{S}_{cp}$, 
             \item $\sigma_i(\mathbf{w}_{P}^{l+1})^T \mathbf{x}_{l+1}(0) \ \forall \ i \in \mathcal{V}_{o2}$ such that $S_l\in \mathcal{S}_{bal}$.
            \item $0 \ \forall \ i \in \mathcal{V}_{o2}$ such that $S_l\in \mathcal{S}_{unbal}$
        \end{itemize}
            where $i\in \mathcal{V}_o$, $l \in \{1,...,n_s\}$ and { $\sigma_i \in \{1,-1\}$} depending on the partition of $\mathcal{V}_{lq}$, $q \in \{1,2\}$ to which $i$ belongs.
\end{lemma}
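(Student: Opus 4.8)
The plan is to exploit the block-triangular structure of $P$ established in Section \ref{Sec4}, which decouples the dynamics of each sink from the rest of the network, and then to read off the limit of the relevant diagonal block using the spectral facts already proved in Theorem \ref{lm:6} together with classical Perron--Frobenius and signed-consensus results.

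First I would observe that, because $S_l$ is a sink of the condensation graph $C(\mathcal{G})$, every out-neighbour of an opinion leader $i\in S_l$ lies inside $S_l$ itself; hence the opinions $\mathbf{x}_{l+1}(k)$ of the leaders forming $S_l$ evolve independently of the rest of the network and are governed solely by the diagonal block $P_{(l+1)(l+1)}$. Since by hypothesis no opinion leader in $S_l$ is stubborn, we have $\beta_i=0$ for all $i\in S_l$, so the affine term in \eqref{vector_op_model} vanishes on this block and the subsystem reduces to the pure iteration $\mathbf{x}_{l+1}(k)=P_{(l+1)(l+1)}^{\,k}\,\mathbf{x}_{l+1}(0)$. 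The whole statement therefore amounts to computing $\lim_{k\to\infty}P_{(l+1)(l+1)}^{\,k}$ in each of the four classes and extracting the $i$-th entry.

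I would then dispatch the four cases as follows. For $i\in\mathcal{V}_{o1}$ the block is the scalar $[1]$, so $x_i(k)\equiv x_i(0)$. For $S_l\in\mathcal{S}_{cp}$, Theorem \ref{lm:6} gives that $P_{(l+1)(l+1)}$ is row-stochastic with a simple eigenvalue $1$ and all other eigenvalues inside the unit disc (self-confidence $\gamma_i>0$ of the $\mathcal{V}_{o2}$ leaders supplies aperiodicity, hence primitivity); consequently $P_{(l+1)(l+1)}^{\,k}\to \mathbb{1}(\mathbf{w}_P^{l+1})^T$ and every entry of $\mathbf{x}_{l+1}(k)$ converges to the common value $(\mathbf{w}_P^{l+1})^T\mathbf{x}_{l+1}(0)$. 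For $S_l\in\mathcal{S}_{bal}$ I would apply the gauge transformation: structural balance yields a signature matrix $D=\mathrm{diag}(\sigma_1,\dots,\sigma_{|S_l|})$ such that $\hat P=DP_{(l+1)(l+1)}D$ is non-negative, row-stochastic and primitive, so $\hat P^{\,k}\to\mathbb{1}\hat{\mathbf{w}}^T$ with $\hat{\mathbf{w}}$ the stationary distribution of $\hat P$. Writing $\mathbf{x}_{l+1}=D\hat{\mathbf{x}}_{l+1}$ (recall $D^2=I$) and using $P_{(l+1)(l+1)}=D\hat P D$ gives $P_{(l+1)(l+1)}^{\,k}\to (D\mathbb{1})(\hat{\mathbf{w}}^T D)$, so setting $\mathbf{w}_P^{l+1}=D\hat{\mathbf{w}}$ yields $x_i(k)\to \sigma_i(\mathbf{w}_P^{l+1})^T\mathbf{x}_{l+1}(0)$. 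Finally, for $S_l\in\mathcal{S}_{unbal}$, Theorem \ref{lm:6} (via \cite{xia2015structural}) gives $\rho(P_{(l+1)(l+1)})<1$, whence $P_{(l+1)(l+1)}^{\,k}\to\mathbb{0}$ and $x_i(k)\to 0$.

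The main obstacle I anticipate is the bookkeeping in the structurally balanced case: I must verify that the signature $\sigma_i$ read off from the bipartition of $S_l$ is consistent with the gauge matrix $D$, and that $\mathbf{w}_P^{l+1}=D\hat{\mathbf{w}}$ is genuinely the left eigenvector of $P_{(l+1)(l+1)}$ for eigenvalue $1$, normalised so that $(\mathbf{w}_P^{l+1})^T\mathbf{v}=1$ with $\mathbf{v}=D\mathbb{1}$ the corresponding right eigenvector. A short computation confirms $(\mathbf{w}_P^{l+1})^TP_{(l+1)(l+1)}=(\mathbf{w}_P^{l+1})^T$ and $(\mathbf{w}_P^{l+1})^T\mathbf{v}=\hat{\mathbf{w}}^T\mathbb{1}=1$, which closes this gap. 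The remaining inputs --- primitivity of the (gauge-transformed) blocks and the strict spectral-radius bound in the unbalanced case --- are already supplied by Theorem \ref{lm:6} and the cited results \cite{degroot1974reaching,xia2015structural}, so they may be quoted rather than reproved.
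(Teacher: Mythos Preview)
Your proposal is correct and mirrors the paper's own argument: both decouple the sink dynamics via the block-triangular structure of $P$, note that in the absence of stubbornness the subsystem reduces to the pure iteration $\mathbf{x}_{l+1}(k)=P_{(l+1)(l+1)}^{\,k}\mathbf{x}_{l+1}(0)$, and then dispatch the four cases by invoking primitivity/row-stochasticity for $\mathcal{S}_{cp}$, the gauge transform (bipartite consensus) for $\mathcal{S}_{bal}$, and $\rho(P_{(l+1)(l+1)})<1$ for $\mathcal{S}_{unbal}$, exactly as in \cite{degroot1974reaching,xia2015structural}. Your explicit computation with the signature matrix $D$ in the balanced case is a bit more detailed than what the paper writes, but the substance is identical.
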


Using this result we construct the nodes of the reduced SFG $\bar{\mathcal{G}}_s$ from $\mathcal{G}_s$ as given below:
First, we consider a node $i$ in $\mathcal{G}_s$ for $i \in \{1,...,m\}$ associated with state $y_i$. Due to the numbering of nodes in $\mathcal{G}$, this node is associated with the final opinion of a follower $i$ in $\mathcal{G}$. It follows from eqn. \eqref{eqn:steady_state_OD} that the final opinion of a follower depends on the opinions of its neighbours and$/$or its initial opinion if it is stubborn. 
Therefore, node $i$ in $\mathcal{G}_s$ has at least one incoming branch originating from a node $j$ associated with the final opinion of $i's$ neighbour in $\mathcal{G}$ (an opinion leader or follower) or $i's$ initial opinion (if $i$ is stubborn). Consequently, node $i$ is not a source node in $\mathcal{G}_s$. Similarly, it continues to be a non-source node in $\bar{\mathcal{G}}_s$ represented as $\bar i$ associated with state $\bar{y}_i=y_i$. 

The node $i$ in $\mathcal{G}_s$ is associated with the final opinion of an opinion leader in $\mathcal{G}$ for $i\in\{m+1,...n\}$. If the opinion leader associated with $i$ is non-stubborn, we know from eqn. \eqref{eqn:steady_state_OD} that its final opinion depends only on the final opinion of opinion leaders in its corresponding sink $S_l=\mathcal{S}^v(i)$. For an opinion leader in $\mathcal{V}_{o1}$, its final opinion is independent of any other agent in $\mathcal{G}$. It implies that if $i$ is associated with an opinion leader in $\mathcal{V}_{o1}$, it forms a source in $\mathcal{G}_s$.
Equivalently, node $i$ in $\mathcal{G}_s$ associated with final opinion of opinion leader in $\mathcal{V}_{o1}$ forms a source 
$\mathcal{O}_r$ in $\bar{\mathcal{G}}_s$ associated with state $\bar{y}_{\mathcal{O}_r}=\bar{y}_i=x_i(0)$ {  for $r\in \{1,...,|\mathcal{V}_{o1}|\}$.}

If node $i$ is associated with an opinion leader in $\mathcal{V}_{o2}$, it has incoming branches in $\mathcal{G}_s$ from one or more nodes associated with opinion leaders in the same sink. Thus, it does not form a source in $\mathcal{G}_s$. However, as we noted in Remark \ref{rem:SFG_Red}, if node $i$ was a source, we could determine its effect on the opinions of the rest of the agents using Mason's gain formula \eqref{eq:sfg-gain} in the SFG. 
{  The following steps are taken in a direction to form source(s) in $\bar{\mathcal{G}}_s$ from the nodes in $\mathcal{G}_s$ associated with opinion leaders in $\mathcal{V}_{o2}$.

We know from Lemma \ref{Thm:3} that the non-stubborn opinion leaders that form sink $S_l$ in $C(\mathcal{G})$ achieve consensus when $S_l$ belongs in $\mathcal{S}_{cp}$.}
Consider a follower $i \in \mathcal{V}_{F}$ in $\mathcal{G}$ which has opinion leaders associated with sink $S_l$ in its neighbourhood such that sink $S_l \in \mathcal{S}_{cp} \cap \mathcal{S}_n$. Then, eqn. \eqref{eqn:y} becomes ${ y}_i=\sum_{h \in N_i}b_{ih}{ y}_h+{ \beta_i x_i(0)}=\sum_{h \in (N_i\setminus S_l)}b_{ih}{  y}_h+{  \sum_{t \in S_l}b_{it}y_t+\beta_i x_i(0)}$. As ${  y}_{  t}={  y_u=(\mathbf{w}_{P}^{l+1})^T \mathbf{x}_{l+1}(0)}$ for each $t,u$ $ \in S_l$, 
\begin{align}
\label{eqn:linearity}
 {  y}_i=\sum_{h \in (N_i\setminus S_l)}b_{ih}{  y}_h+{ \big(\sum_{t \in S_l}b_{it}\big)(\mathbf{w}_{P}^{l+1})^T \mathbf{x}_{l+1}(0)+\beta_i x_i(0)}   
\end{align}
 Further, we know that the final opinion of the opinion leaders in $S_l$
are independent of any other agent's opinion except those in $S_l$. Therefore, it follows from the 
 discussion that all the nodes in $\mathcal{G}_s$ associated with the final opinions of opinion leaders in $S_l$ can be represented by a single source $\mathcal{O}_r$ in $\bar{\mathcal{G}}_s$ associated with state $\bar{y}_{\mathcal{O}_r}$ equal to the consensus value $\bar{y}_{\mathcal{O}_r}=(\mathbf{w}_{P}^{l+1})^T \mathbf{x}_{l+1}(0)$ for $r\in\{|\mathcal{V}_{o1}|+1,...,|\mathcal{V}_{o1}|+|\mathcal{S}_{cp}|\}$.

Similarly, if a node $i$ in $\mathcal{G}_s$ is associated with the opinion leader in a sink $S_l\in \mathcal{S}_{bal} \cap \mathcal{S}_n$, then $S_l$ is { composed of non-stubborn opinion leaders} such that the opinion leaders in $S_l$ are partitioned into two sets $\mathcal{V}_{l1}$ and $\mathcal{V}_{l2}$ such that $\mathcal{V}_{lq} \neq \emptyset, (q\in\{1,2\})$. We know from Lemma \ref{Thm:3} that the opinion leaders within a partition have consensus and the final opinions of opinion leaders in the two different partitions are equal in magnitude and opposite in signs. 
Since the magnitude of the final opinion depends only on the initial opinions of opinion leaders in $S_l$, following a similar analysis preceding eqn. \eqref{eqn:linearity}, it follows that the opinion leaders in sink $S_l$
contribute two sources { $\mathcal{O}_r$ and $\mathcal{O}_{r+1}$} to $\bar{\mathcal{G}}_s$ corresponding to each of the bipartitions associated with states $\bar{y}_{\mathcal{O}_r}=a$ and $\bar{y}_{\mathcal{O}_{r+1}}=-a$, respectively { where $a=(\mathbf{w}_{P}^{l+1})^T \mathbf{x}_{l+1}(0)$ and $r \in \{|\mathcal{V}_{o1}|+|\mathcal{S}_{cp}|+1,...,|\mathcal{V}_{o1}|+|\mathcal{S}_{cp}|+2|\mathcal{S}_{bal}|-1\}$.}

{ Next, consider a sink $S_l\in\mathcal{S}_{cp} \cup \mathcal{S}_{bal} \cup \mathcal{S}_{unbal}$ which consists of one or more stubborn opinion leaders. Due to strong connectivity among opinion leaders in $S_l$, a node $i$ in $\mathcal{G}_s$ associated with the final opinion of an opinion leader in $S_l$ has an incoming branch or a path in $\mathcal{G}_s$ from the node associated with the initial opinion of the stubborn opinion leader in $S_l$. Therefore, $i$
is a non-source node in ${\mathcal{G}}_s$. Because opinion leaders in sink $S_l$ often disagree due to stubbornness, node $i$ contributes a non-source node $\bar i$ in $\bar{\mathcal{G}}_s$. }

{ If node $i$ in $\mathcal{G}_s$ is associated with an opinion leader in a sink $S_l\in \mathcal{S}_{unbal}$ such that
only non-stubborn opinion leaders in $\mathcal{G}$ constitute $S_l$, then it follows from Lemma \ref{Thm:3} that the opinions of all the corresponding opinion leaders converge to zero. Substituting this result in eqn. \eqref{eqn:y} leads to the conclusion that their influence on their followers is null. So, even if the network has such opinion leaders, they do not form any source in $\bar{\mathcal{G}}_s$. Since they neither influence nor are influenced by any other source, they do not form a node in $\bar{\mathcal{G}}_s$ 

Finally, a node $i$ in $\mathcal{G}_s$ for $i \in \{n+1,...,n+s\}$ corresponds to the initial opinion of a stubborn agent. The initial opinion of a stubborn agent contributes to the final opinions of both the stubborn agent and those with a path to it.  Since the initial opinion does not change, this node in $\mathcal{G}_s$ forms a source $\mathcal{O}_r$ in $\bar{\mathcal{G}}_s$ for $r \in \{|\mathcal{V}_{o1}|+|\mathcal{S}_{cp}|+2|\mathcal{S}_{bal}|+1,...,|\mathcal{V}_{o1}|+|\mathcal{S}_{cp}|+2|\mathcal{S}_{bal}|+s\}$. Since all the nodes in $\mathcal{G}_s$ are mapped to $\bar{\mathcal{G}}_s$, we derive the branches of $\bar{\mathcal{G}}_s$.
} 

We begin the determination of branches in $\bar{\mathcal{G}}_s$ by evaluating the branch gain of branch $(\mathcal{O}_r, \bar i)$ from source $\mathcal{O}_r$ to a non-source node $\bar i$. When $\mathcal{O}_r$ is a collective source formed by the opinion leaders in $S_l$, it follows from eqn. \eqref{eqn:linearity},
\begin{itemize}
    \item $g_{\bar i, \mathcal{O}_r}=\sum_{h \in S_l}b_{ih}$ when $S_l \in \mathcal{S}_{cp} \cap \mathcal{S}_n$
    \item $g_{\bar i, \mathcal{O}_r}=\sum_{h \in \mathcal{V}_{lq}}b_{ih}$ when $S_l \in \mathcal{S}_{bal} \cap \mathcal{S}_n$ and $\mathcal{O}_r$ corresponds to opinion leaders in $\mathcal{V}_{lq}$ for $q \in \{1,2\}$.
\end{itemize}
The gains of the remaining branches in $\bar{\mathcal{G}}_s$ can be determined using eqn. \eqref{eqn:y}.

\subsection{Proof of Theorem \ref{Thm:4}}
\begin{proof}
 The final opinions of node $\bar i$ in $\bar{\mathcal{G}}_s$ by using eqn. \eqref{eq:sfg-gain} can be formulated as,
\begin{align*}
\label{eqn:CII_verify}
   \bar{y}_i=\sum_{r=1}^{|\Omega_i|}c_{ir}{\bar y}_{\mathcal{O}_r} 
\end{align*}
where $\Omega_i$ is the set of sources in $\bar{\mathcal{G}}_s$ who have a forward path to node $\bar i$ and ${\bar y}_{\mathcal{O}_r}\in\mathbb{R}$ is the node state associated with the source $\mathcal{O}_r$. The individual influence coefficient $\theta_{ij}$ for pair of agents $i$ and $j$ in $\mathcal{G}$ 
is determined as follows: 
\begin{itemize}
\item If an opinion leader $j$ in $\mathcal{V}_{o1}$ or {  the initial opinion of a stubborn agent $j$ forms a source $\mathcal{O}_r$, then $\bar{y}_{\mathcal{O}_r}=x_j(0)$. So, $c_{ir}$ gives the influence of agent $j$ only. Thus,  $\theta_{ij}=c_{ir}$.}
\item The opinion leaders in sink $S_l\in \mathcal{S}_{cp} \cap \mathcal{S}_n$ form a source $\mathcal{O}_r$, 
we know that ${\bar{y}}_{\mathcal{O}_r}=(\mathbf{w}_{P}^{l+1})^T\mathbf{x}_l(0)$.  
 The contribution of opinion leader $j's$ initial opinion in consensus value is $(\mathbf{w}_{P}^{l+1})_{\kappa}x_j(0)$ when $j \in S_l$.
 Thus, $\theta_{ij}$ for opinion leader $j$ and agent
$i$ is $\theta_{ij}=c_{ir}(\mathbf{w}_{P}^{l+1})_{\kappa}$.

    \item The opinion leaders in sink $S_l \in \mathcal{S}_{bal} \cap \mathcal{S}_n$ form sources $\mathcal{O}_{r}$ and $\mathcal{O}_{r+1}$ depending on partitions $V_{lp}, \ p \in \{1,2\}$. The sources $\mathcal{O}_r$ and $\mathcal{O}_{r+1}$
    are associated with states $\sigma_{p}(\mathbf{w}_{P}^{l+1})^T\mathbf{x}_l(0)$ where $\sigma_p \in \{+1,-1\}, p \in \{1,2\}$. Note $\sigma_1 \neq \sigma_2$.
    
    An opinion leader $j \in S_l$ contributes to the state of both sources $(\mathbf{w}_{P}^{l+1})_{\kappa}x_j(0)$ and $-(\mathbf{w}_{P}^{l+1})_{\kappa}x_j(0)$, respectively.
    Thus, the exact influence of an opinion leader $j \in S_l$ on the follower $i$ is $\theta_{ij}=(\sigma_{1}c_{ir}+\sigma_2c_{i(r+1)})(\mathbf{w}_{P}^{l+1})_{\kappa}$.   
\end{itemize}
The remaining cases follow from Lemma \ref{Thm:3}. 
Hence, proved.
\end{proof}
\section*{References}
\bibliographystyle{IEEEtran}
\bibliography{IEEEabrv,references_2}
\vskip -2\baselineskip plus -1fil
\begin{IEEEbiography}[{\includegraphics[width=1in,clip,keepaspectratio]{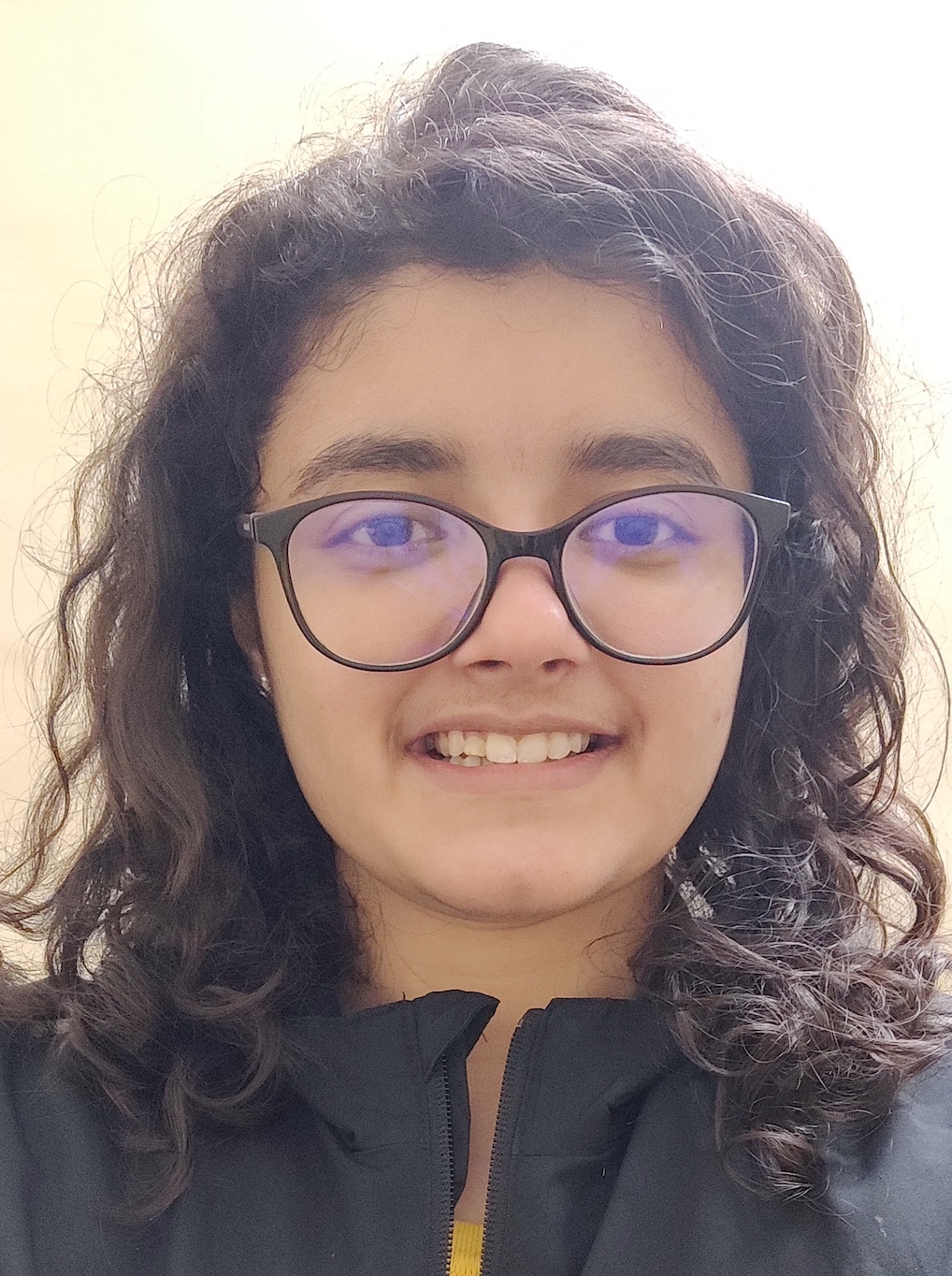}}]{Aashi Shrinate} received her B. Tech degree in Electrical Engineering from Motilal Nehru National Institute of Technology, Allahabad, India in 2020. She is working towards a PhD degree in control and automation specialization in Distributed Control and Decision Lab, Department of Electrical Engineering at the Indian Institute of Technology, Kanpur. She has been receiving the Prime Minister Research Fellowship from 2023. Her research focuses on networked dynamical systems with applications to opinion dynamics in social networks and robotic networks. 
\end{IEEEbiography}
\vskip -2\baselineskip plus -1fil
\begin{IEEEbiography}[{\includegraphics[width=1in,height=1.05in,clip]{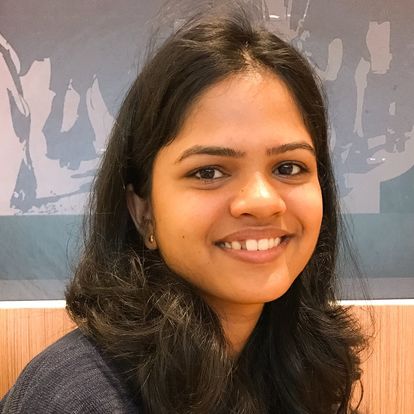}}]{Twinkle Tripathy} is currently an Assistant Professor in the Department of Electrical Engineering of IIT Kanpur. She received a Dual Degree of MTech. and Ph.D. at Systems \& Control Engineering, IIT Bombay in Dec. 2016. She started her post-doctoral tenure at the School of Electrical \& Electronic Engineering, NTU, Singapore. After serving there for a year, she joined the Faculty of Aerospace Engineering, Technion – Israel Institute of Technology as a post-doctoral fellow. Her research interests broadly include control and guidance of autonomous systems, cyclic pursuit strategies and opinion dynamics.
\end{IEEEbiography}
\end{document}